\def\useLongLipics{}
\newif\iflong
\newif\ifshort
\long\def\inlong#1{\iflong#1\fi}
\long\def\inshort#1{\ifshort#1\fi}

\ifdefined\useLongLipics
   \longtrue
   \shortfalse

   \documentclass[a4paper,USenglish]{lipics-v2019}

\else
   \longfalse
   \shorttrue

   \documentclass[a4paper,USenglish,anonymous]{lipics-v2019}

\fi

\emergencystretch=1.4em

\usepackage[WOenumitem,WOprobsubsuperscripts,WOhyperref]{auxlib}
\usepackage{xspace}
\usepackage{float}
\usepackage{multirow}
\usepackage{nicefrac}
\usepackage[abbreviations,binary-units]{siunitx}

\usepackage[ruled,vlined]{algorithm2e} %
\Crefname{algocf}{Algorithm}{Algorithms}

\usepackage[
        doi         = true,
        giveninits  = true,
        isbn        = false,
        maxbibnames = 99,
        maxcitenames = 2,
        sortcites   = true,
        style       = numeric,
        url         = false,
    ]{biblatex}

\defbibheading{bibliography}{%
	\section*{References}
}

\AtEveryBibitem{
    \clearfield{month}
    \clearlist{location}
    \clearlist{publisher}
    \clearname{editor}
}

\addbibresource{references.bib}
\addbibresource{DBLPelizer.bib}

\lstset{
	language=C++,
	basicstyle=\ttfamily\footnotesize,
	numbers=left,
	numberstyle=\scriptsize,
	keywordstyle=\color{blue}\bf,
	}
\iflong
\makeatletter
\lstset{float=t}
\let\lst@floatdefault\lst@float
\makeatother
\fi

\usepackage{intcalc}
\usepackage{tikz}
\usetikzlibrary{decorations.pathmorphing,calc,arrows.meta,arrows}
\tikzset{snake it/.style={decorate, decoration=snake}}
\tikzset{onif/.code args={<#1>#2}{\ifthenelse{#1}{\pgfkeysalso{#2}}{}}}
\tikzset{
	conflict/.style={DarkRed, fill=red!10},
	state1/.style={DarkGreen, fill=green!10},
	state2/.style={DarkBlue, fill=blue!5},
	state3/.style={Maroon, fill=Yellow!10},
	state4/.style={Red, fill=Thistle!30},
	state5/.style={Black, fill=black!5},
	state?/.style={conflict}
}
\def\arrowHead{Latex[length=0.5em, width=0.5em]}

\nolinenumbers

\title{Simulating Population Protocols in Sub-Constant Time per Interaction}
\makeatletter
\def\@author{}
\makeatother
\iflong
\author{Petra Berenbrink}{Universität Hamburg, Germany}{petra.berenbrink@uni-hamburg.de}{}{}
\author{David Hammer}{University of Southern Denmark \and Goethe University Frankfurt, Germany}{hammer@imada.sdu.dk}{https://orcid.org/0000-0002-0226-3475}{}
\author{Dominik Kaaser}{Universität Hamburg, Germany}{dominik.kaaser@uni-hamburg.de}{https://orcid.org/0000-0002-2083-7145}{}
\author{Ulrich Meyer}{Goethe University Frankfurt, Germany}{umeyer@ae.cs.uni-frankfurt.de}{}{}
\author{Manuel Penschuck}{Goethe University Frankfurt, Germany}{mpenschuck@ae.cs.uni-frankfurt.de}{}{}
\author{Hung Tran}{Goethe University Frankfurt, Germany}{hung@ae.cs.uni-frankfurt.de}{}{}

\authorrunning{P.\ Berenbrink, D.\ Hammer, D.\ Kaaser, U.\ Meyer, M.\ Penschuck, and H.\ Tran}
\Copyright{Petra Berenbrink, David Hammer, Dominik Kaaser, Ulrich Meyer, Manuel Penschuck, and Hung Tran}
\funding{%
   This work was partially supported by the Deutsche Forschungsgemeinschaft (DFG) under grants ME~2088/3-2, ME~2088/4-2, and ME 2088/5-1.%
}
\acknowledgements{%
   This material is based upon work initiated on a workshop of the DFG~FOR~2971/1 ``Algorithms, Dynamics, and Information Flow in Networks''.
   We would like to thank the Center for Scientific Computing, University of Frankfurt for making their HPC facilities available.%
}
	\ccsdesc{Computing methodologies $\rightarrow$ Modeling and simulation $\rightarrow$ Simulation types and techniques $\rightarrow$ Agent / discrete models}
	\supplement{%
		Implementations of the simulators (including scripts and data for reproduciblity) are available at \url{https://ae.cs.uni-frankfurt.de/r/projects/population-simulator}.
}
\else
	\ccsdesc{}
	
	\supplement{%
		Customizable implementations of the simulators discussed (including scripts and data to reproduce figures included in this paper) are referenced in the full version of this paper.
}
\fi

\keywords{%
   Population Protocols, %
   Simulation, %
   Random Sampling,
   \dat
}

\hideLIPIcs %

\makeatletter
\def\?#1{}
\def\whp{\hc{w.h.p}\@ifnextchar.{.\?}{\@ifnextchar,{.}{\@ifnextchar){.}{.\ }}}}
\def\Whp{\hc{W.h.p}\@ifnextchar.{.\?}{\@ifnextchar,{.}{.\ }}}
\makeatother

\let\epsilon\varepsilon

\newcommand{\hc}[1]{%
   {#1}%
}
\newcommand{\hcm}[1]{\hc{\ensuremath{#1}}\xspace}
\newcommand{\hcs}[1]{\hc{#1}\xspace}
\newcommand{\formatIdentifier}[1]{\hcm{\mathsf{#1}}}

\newcommand{\colDistr}[2]{\ensuremath{\formatIdentifier{coll}\left(#1, #2\right)}}

\def\dat{Dynamic Alias Table\xspace}

\def\untouched {\formatIdentifier{untouched}}
\def\delayed   {\formatIdentifier{delayed}}
\def\updated   {\formatIdentifier{updated}}

\def\Oh        {\hcm{\operatorname{O}}}
\def\oh        {\hcm{\operatorname{o}}}
\def\distrAs   {\sim}
\def\ellHalf   {\hcm{\floor{\ell/2}}}
\def\q         {\hcm{|Q|}}

\def\eg        {\hcs{e.g.,}}
\def\cf        {\hcs{cf.}\ }
\def\ie        {\hcs{i.e.,}}

\def\outDomain{\hcm{Y}}
\def\outFn{\hcm{\gamma}}

\newcommand{\confAtTime}[1]{\hcm{C(#1)}}
\def\initialConf{\hcm{C_0}}

\newcommand{\algStepWiseBase}[1]{\hcm{\text{\textsc{Seq}}_\mathsf{#1}}}

\def\algStepWise{\algStepWiseBase{}}

\def\algStepWiseArray{\algStepWiseBase{Array}}
\def\algStepWiseArrayPre{\hcm{\text{\textsc{Seq}}_\mathsf{Array}^\mathsf{prefetch}}}
\def\algStepWiseBST  {\algStepWiseBase{BST}}
\def\algStepWiseAlias{\algStepWiseBase{Alias}}
\def\algStepWiseLinear{\algStepWiseBase{Linear}}

\def\algSimple{\hcm{\text{\textsc{Batched}}}}
\def\algMulti{\hcm{\text{\textsc{MultiBatched}}}}

\renewcommand{\set}[1]{\hcm{\left\{#1\right\}}}
\newcommand{\problem}[1]{\textsc{#1}}

\ifshort
\makeatletter
\renewcommand\subsection{\@startsection{subsection}{2}{\z@}%
                                     {-2ex\@plus -1ex \@minus -.2ex}%
                                     {1ex \@plus .2ex}%
                                     {\sffamily\Large\bfseries\raggedright}}
\def\paragraph#1{\@startsection{subparagraph}{5}{\z@}%
                                       {1ex \@plus .25ex \@minus .5ex}%
                                       {-1em}%
                                      {\sffamily\normalsize\bfseries}{#1.}}
\let\dcm@maketitle\maketitle
\def\maketitle{\dcm@maketitle\markright{}\setcounter{page}{0}}
\makeatother
\SetAlFnt{\footnotesize}
\setlength\textfloatsep{10pt plus 2.0pt minus 4.0pt}
\fi

\makeatletter%
\begin{document}
\iffalse
   {
      \section*{\color{red} TODO}
      \dtd[inline]{resolve inconsistencies in INCONSISTENCIES file}
      \listoftodos
      \setcounter{page}{0}
      \clearpage
   }
\fi

   \maketitle

   \begin{abstract}
We consider the problem of efficiently simulating population protocols.
In the population model, we are given a distributed system of $n$ agents modeled as identical finite-state machines.
In each time step, a pair of agents is selected uniformly at random to interact.
In an interaction, agents update their states according to a common transition function.
We empirically and analytically analyze two classes of simulators for this model.
First, we consider sequential simulators executing one interaction after the other.
Key to the performance of these simulators is the data structure storing the agents' states.
For our analysis, we consider plain arrays, binary search trees, and a novel \dat data structure.
Secondly, we consider batch processing to efficiently update the states of multiple independent agents in one step.
For many protocols considered in literature, our simulator requires amortized sub-constant time per interaction and is fast in practice:
given a fixed time budget, the implementation of our batched simulator is able to simulate population protocols several orders of magnitude larger compared to the sequential competitors,
and can carry out $2^{50}$ interactions among the same number of agents in less than \SI{400}{\second}.    \end{abstract}

   \inshort{
      \clearpage
      \setcounter{page}{1}
   }

\section{Introduction} \label{sec:introduction}
We consider the \emph{population model}, introduced by \textcite{DBLP:journals/dc/AngluinADFP06} to model systems of resource-limited mobile agents that interact to solve a common task.
Agents are modeled as finite-state machines.
The computation of a \emph{population protocol} is a sequence of pairwise interactions of agents.
In each interaction, the two participating agents observe each other's states and update their own state according to a transition function common to all agents.
\iflong
    Note that population protocols do not halt.
\fi

Typical applications of population protocols are networks of passively mobile sensors \cite{DBLP:conf/podc/AngluinADFP04}.
As an example, consider a flock of birds, where each bird is equipped with a simple sensor.
Two sensors communicate whenever their birds are sufficiently close.
An application could be a distributed disease monitoring system raising an alarm if the number of birds with high temperature rises above some threshold.
Further processes which resemble properties of population protocols include chemical reaction networks \cite{DBLP:journals/nc/SoloveichikCWB08}, programmable chemical controllers at the level of DNA \cite{chen2013}, or biochemical regulatory processes in living cells \cite{cardelli2012}.

\iflong
    Early theoretical results \cite{DBLP:conf/podc/AngluinAE06,DBLP:journals/dc/AngluinAER07} focused on the computational power of the population model.
    For example, it has been shown \cite{DBLP:conf/podc/AngluinADFP04,DBLP:journals/dc/AngluinADFP06} that all semilinear predicates can be computed by a population protocol.
\fi
While the computational power of population protocols with constantly many states per agent is well understood by now (see below), less is known about the power of protocols with state spaces growing with the population size.
In this setting, much interest has been on analyzing the runtime and state space requirements for \emph{probabilistic} population protocols, where the two interacting agents are sampled in each time step independently and uniformly at random from the population.
This notion of a probabilistic scheduler allows the definition of a \emph{runtime} of a population protocol.
The runtime and the number of states are the main performance measures used in the theoretical analysis of population protocols.

For the theoretical analysis of population protocols, a large toolkit is available in the literature.
\iflong
    These tools range from standard techniques, such as tail bounds, potential functions, and couplings of Markov chains to mean field approximations and differential equations (see \cref{sec:related-work} for an overview of related work).
\fi
Consequently, the remaining gaps between upper and lower bounds for many quantities of interest have been narrowed down:
for many protocols, the required number of states has become sub-logarithmic, while the runtime approaches more and more the (trivial) lower bounds for any meaningful protocol.
So far, when designing new protocols, simulations have always proven a versatile tool in getting an intuition for these stochastic processes.
However, once observables are of order $\log\log{n}$ and below, naive population protocol simulators fail to deliver the necessary insights (\eg $\log\log{n} \le 5$ for typical input sizes of $n \le 2^{32}$).
Our main contribution in this paper is a new simulation approach allowing to execute a large number of interactions even if the population size exceeds $2^{40}$.
In the remainder of this section, we first give a formal model definition in \cref{sec:model} and then describe our main contributions and related work in \cref{sec:our-contributions,sec:related-work}.

\subsection{Formal Model Definition}
\label{sec:model}
In the \emph{population model}, we are given a distributed system of~$n$ agents modeled as finite-state machines.
A \emph{population protocol} is specified by a state space $Q = \set{q_1, \ldots, q_{\q}}$, an output domain $\outDomain$, a transition function $\delta\colon Q\times Q \rightarrow Q\times Q$, and an output function $\outFn\colon Q \rightarrow O$.
At time~$t$, each agent $i$ has a state $s_i(t) \in Q$, which is updated during the execution of the protocol.
The current output of agent $i$ in state $s_i(t)$ is $\outFn(s_i(t))$.
The \emph{configuration} $\confAtTime{t} = \set{s_1(t), \ldots, s_n(t)}$ of the system at time $t$ contains the states of the agents after $t$ interactions.
For the sake of readability, we omit the parameter~$t$ in $\confAtTime{t}$ and $s_i(t)$ when it is clear from the context.
The \emph{initial configuration} is denoted $\confAtTime{0} = \initialConf$.

The \emph{computation} of a population protocol runs in a sequence of discrete time steps.
In each time step, a \emph{probabilistic scheduler} selects an ordered pair of agents $(u, v)$ independently and uniformly at random to \emph{interact}.
Agent $u$ is called the \emph{initiator} and agent $v$ is the \emph{responder}.
During this \emph{interaction}, both agents $u$ and $v$ observe each other's state and update their states according to the transition function $\delta$ such that $\left(s_u(t{+}1),\ s_v(t{+}1)\right) \gets \delta\left(s_u(t),\ s_v(t)\right)$.

A given problem for the population model specifies the agents' initial states, the output domain $O$, and (a set of) \emph{desired (output) configurations} for a given input.
As an example, consider the \problem{Majority} problem.
Each agent is initially in one of two states $q_A$ and $q_B$ corresponding to two opinions $A$ and $B$.
Assuming that $A$ is the initially dominant opinion, the protocol concludes once all agents $u$ give $\outFn(s_u) = A$ as their output.
Any configuration in which all agents output the initially dominant opinion is a desired configuration.

This notion of a desired configuration allows to formally define two notions of a \emph{runtime} of a population protocol.
The \emph{convergence time} $T_C$ is the number of interactions until the system enters a desired configuration and never leaves the desired configurations in a given run.
The \emph{stabilization time $T_S$} is the number of interactions until the system enters a desired \emph{stable} configuration for which there does not exist any sequence of interactions due to which the system leaves the desired configurations.
A population protocol is \emph{stable}, if it always eventually reaches a desired output configuration.%

A number of variants of this model are commonly used.
For \emph{symmetric protocols}, the order of the interacting agents is irrelevant for the transition.
In particular, this means that if $\delta(q_u, q_v) = (q_u', q_v')$, then $\delta(q_v, q_u) = (q_v', q_u')$.
In \emph{protocols with probabilistic transition functions}, the outcome of an interaction may be a random variable.
In \emph{one-way protocols}, %
only the initiator updates its states such that $\delta(q_u, q_v) = (q_u', q_v)$ for any interaction.

\paragraph{Model Assumptions}
We assume a \emph{meaningful} protocol which converges after at most $N = \poly(n)$ interactions, has an $\Oh(1)$ time transition function~$\delta$, and uses $\q < \sqrt{n}$ states (observe that many relevant protocols only use $\q = \Oh(\polylog n)$ states; see \cref{sec:related-work}).

\subsection{Our Contributions}
\label{sec:our-contributions}
In this paper, we present a new approach for simulating population protocols.
Our simulator allows us to efficiently simulate a large number, $N$, of interactions for large populations of size $n$.
Our findings are summarized in \cref{tab:simulators}. \inshort{The proofs are given in the appendix.}
\begin{table}[t]
	\centering
    \caption{%
       Simulating $N$ interactions among $n$~agents in $\q$~states.
       For \algMulti, we restrict  $\q = \omega(\sqrt{\log n})$. %
       Values indicated by $\dagger$ hold in expectation.
    }
    \label{tab:simulators}
	\def\thformat#1{\sffamily\bfseries #1}
	\def\th#1{\multicolumn{1}{c|}{\thformat{#1}}}
        \begin{tabular}{|l|l|l|l|l|}\cline{2-5}
           \th{}&\th{Simulator}&\th{Section}&\th{Time Complexity}&\th{Space Complexity (bits)}\\\cline{2-5}\noalign{\vskip\doublerulesep}\cline{1-5}
			\multirow{4}{*}{\rotatebox[origin=c]{90}{\thformat{Sequential}}}
            & \algStepWiseArray  & \cref{sec:sequential-simulation} & $\Theta(N)$             & $\Theta(n \log \q)$ \\\cline{2-5}
            & \algStepWiseLinear & \cref{sec:sequential-simulation} & $\Oh(N \q)$ & $\Theta(\q \log n)$ \\\cline{2-5}
            & \algStepWiseBST    & \cref{sec:sequential-simulation} & $\Theta(N \log \q)$     & $\Theta(\q \log n)$ \\\cline{2-5}
            & \algStepWiseAlias  & \cref{sec:sequential-simulation} & $\Theta(N)$ \whp        & $\Theta(\q \log n)$ \\\cline{1-5}\noalign{\vskip\doublerulesep}\cline{1-5}
			\multirow{2}{*}{\rotatebox[origin=c]{90}{\thformat{Batch}}}
            & \algSimple        & \cref{sec:batch-processing} & $\Oh(N(\log n + \q^2) / \sqrt{n})^\dagger$ & $\Theta(\q \log n)$ \\\cline{2-5}
            & \algMulti         & \cref{sec:merging-runs}     & $\Oh(N \q \sqrt{\log n} / \sqrt{n} )^\dagger$ & $\Theta( \q \log n)$ \\\hline
        \end{tabular}
\end{table}

\paragraph{Sequential Simulators}
As a baseline, we directly translate the population model into a sequential algorithm framework~$\algStepWise$:
\algStepWise selects for each interaction two agents uniformly at random, updates their states, and repeats.
We analyze the runtime and memory consumption of various variants in~\cref{sec:sequential-simulation}.

\paragraph{Batch Processing}
To speed up the simulation, we introduce and exploit \emph{collision-free runs}, a sequence of interactions where no agent participates more than once.
Our algorithms \algSimple and \algMulti coalesce these independent interactions into batches for improved efficiency.
\algSimple first samples the length $\ell$ of a collision-free run.
It then randomly pairs $\ell$ independent agents, adds one more interaction ---the collision--- reusing one of the run's agents, and finally repeats.
\algSimple is presented in \cref{sec:batch-processing} and extended into \algMulti in \cref{sec:merging-runs}.
We discuss practical details and heuristics in \cref{sec:heuristics}.

\paragraph{Dynamic Alias Tables}
The simulation of population protocols often needs an \emph{urn-like} data structure to efficiently sample random agents (marbles) and update their states (colors).
The \emph{alias method} \cite{DBLP:journals/toms/Walker77,DBLP:journals/tse/Vose91} enables random sampling from arbitrary discrete distributions in $\Oh(1)$ time.
However, it is static in that the distribution may not change over time.
Thus, we extend it and analyze a \emph{\dat} in \cref{sec:sequential-simulation}.
It supports sampling with and without replacement uniformly at random (u.a.r.) and addition of elements (if the urn is sufficiently full).
We believe this data structure might be of independent interest and show:

\begin{theorem}[name=,restate=restateThmDynamicAliasTable,label=thm:dynamic-alias-table]
Let $U$ be a \dat that stores an urn of $n$ marbles, where each marble has one of $k$ possible colors.
$U$ requires $\Theta(k \log{n})$ bits of storage. If $n \geq k^2$, we can \begin{itemize}
\item select a marble u.a.r.\ from $U$ with replacement in expected constant time,
\item select a marble u.a.r.\ from $U$ without replacement in expected amortized constant time,
\item and add a marble of a given color to $U$ in amortized constant time.
\end{itemize}
\end{theorem}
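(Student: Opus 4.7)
The plan is to realize the \dat as a classical Walker--Vose alias table over the $k$ colors, equipped with exact counters $c_1,\dots,c_k$ and a lazy-rebuild scheme that amortizes the $\Theta(k)$ cost of rebuilding over $\Omega(k)$ operations. The storage bound follows directly: each of the $k$ buckets stores two color labels in $O(\log k)$ bits and an integer threshold in $[0,n]$ using $O(\log n)$ bits, plus the $k$ counters of $O(\log n)$ bits each, for a total of $\Theta(k \log n)$ bits.

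Sampling with replacement is the classical alias method verbatim: pick a bucket uniformly, compare a second uniform value to its threshold, and return the primary or alias label. This is $O(1)$ per query in the worst case.

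For sampling without replacement, I keep the alias table synchronized not with the current counters $c_i$ but with ``snapshot'' counts $\hat c_i \geq c_i$ from the most recent rebuild, with $\hat n = \sum_i \hat c_i$. A query repeatedly draws a color $i$ from the table (probability $\hat c_i/\hat n$) and accepts it with probability $c_i/\hat c_i$; on acceptance, $c_i$ is decremented. The per-draw acceptance probability is $\sum_i (\hat c_i/\hat n)(c_i/\hat c_i) = n/\hat n$, so triggering a rebuild as soon as $n < \hat n/2$ keeps acceptance at least $1/2$ and each query uses $O(1)$ expected draws. Since at least $\hat n/2 = \Omega(n)$ samples occur between two rebuilds, the $\Theta(k)$ rebuild cost amortizes to $O(k/n) = O(1)$ per query as long as $n \geq k$.

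Additions are handled by the same rebuild mechanism, but the snapshot now overestimates each $c_i$ by $\Theta(n/k)$, so that the alias table sampled proportionally to $\hat c_i$ remains valid (up to rejection) as marbles are added. An addition of color $j$ increments $c_j$ without touching the table, and a rebuild is triggered only when the excess against the snapshot crosses a constant fraction of $\hat n$. This is where the hypothesis $n \geq k^2$ enters: only then does each bucket provide slack $n/k \geq k$, allowing the $\Theta(k)$ rebuild to amortize over $\Omega(n/k)$ additions even in the worst case where every addition targets the bucket with smallest slack, yielding $O(k \cdot k/n) = O(1)$ per addition. The main obstacle I anticipate is coupling the two rebuild triggers coherently: a rebuild forced by heavy addition may occur while few samples have been taken, and vice versa. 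The proof must argue that every rebuild can be charged to $\Omega(k)$ operations of at least one type, so that the combined amortized cost per operation is $O(1)$ independently of how the workload interleaves insertions and deletions.
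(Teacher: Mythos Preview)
Your with-replacement claim is wrong as stated. You say sampling with replacement is the classical alias method ``verbatim'' and therefore worst-case $O(1)$, but the alias table you maintain encodes the snapshot counts~$\hat c_i$, not the current counts~$c_i$. After even a single insertion or deletion, sampling the table directly returns color~$i$ with probability $\hat c_i/\hat n \neq c_i/n$. The fix is the same rejection step you already use for without-replacement---sample~$i$ from the table, accept with probability $c_i/\hat c_i$, and simply omit the decrement---which yields the correct distribution in \emph{expected} $O(1)$ time, matching the theorem. Worst-case $O(1)$ is not available once the structure is mutable.

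Aside from that, your construction is a genuine alternative to the paper's. The paper does not keep a static alias table plus a side array of exact counters. Instead, it stores \emph{integer} weights $F[i],S[i]$ in the table itself (row weight $R[i]=F[i]+S[i]$) and updates them in place: a sampled-and-removed marble decrements the very entry it came from, and an inserted marble of color~$i$ increments~$F[i]$. Rejection is then at the row level---draw $X$ uniform in $\{0,\dots,R_{\max}-1\}$ and reject if $X\ge R[i]$---and a rebuild is triggered when some $R[i]$ leaves the window $[\alpha\lfloor n/k\rfloor,\beta\lceil n/k\rceil]$. The amortization is the same as yours (a net change of $(1-\alpha)\lfloor n/k\rfloor \ge \Omega(k)$ to a single row is needed before rebuild), and it dissolves your worry about coupling the two triggers: both conditions live on the same quantity~$R[i]$, so an addition and a removal that touch the same row cancel and can never jointly force a rebuild. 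The paper's scheme avoids your padding of the snapshot (acceptance is essentially~$1$ right after a rebuild) and needs no auxiliary counter array, since the table \emph{is} the state; your scheme has the modular advantage of treating the alias table as a black box.
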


\subsection{Related Work}
\label{sec:related-work}

The population model was introduced in \cite{DBLP:conf/podc/AngluinADFP04,DBLP:journals/dc/AngluinADFP06}, assuming a constant number of states per agent.
Together with \cite{DBLP:conf/podc/AngluinAE06,DBLP:journals/dc/AngluinAER07}, their results show that all semilinear predicates are stably computable in this model.
In the following, we focus on two prominent problems, \problem{Majority} and \problem{Leader Election}.
For a broad overview, we refer to surveys \cite{DBLP:journals/eatcs/AspnesR07} and \cite{DBLP:journals/eatcs/ElsasserR18}.

In \cite{DBLP:journals/dc/AngluinAE08} a \problem{Majority} protocol with three states is presented where the agents agree on the majority after $\LDAUOmicron{n \log n}$ interactions \whp (\emph{with high probability} $1-n^{-\Omega(1)}$), if the initial numbers of agents holding each opinion differ by at least $\LDAUomega{\sqrt{n} \log n}$.
In \cite{DBLP:conf/icalp/MertziosNRS14,DBLP:journals/siamco/DraiefV12}, four-state protocols are analyzed that stabilize in expectation in $\LDAUOmicron{n^2 \log n}$ interactions.
In a recent series of papers~\cite{DBLP:conf/nca/MocquardAABS15, DBLP:conf/soda/AlistarhAEGR17, DBLP:conf/soda/AlistarhAG18, DBLP:conf/podc/AlistarhGV15, DBLP:conf/podc/BilkeCER17, DBLP:conf/wdag/BerenbrinkEFKKR18, DBLP:journals/corr/abs-1805-04586}, bounds for the \problem{Majority} problem have been gradually improved.
The currently best known protocol \cite{DBLP:conf/wdag/BerenbrinkEFKKR18} solves \problem{Majority} \whp in $\LdauOmicron{n \log^2 n}$ interactions using $\LdauOmicron{n \log^{5/3} n}$ states.
Regarding lower bounds, \cite{DBLP:conf/soda/AlistarhAEGR17} shows that
protocols with less than $(\log\log n)/2$ states require in expectation $\LDAUOmega{n^2 / \polylog(n)}$ interactions to stabilize.
In \cite{DBLP:conf/soda/AlistarhAG18} it is shown that any \problem{Majority} protocol that stabilizes in $n^{2 - \LDAUOmega{1}}$ expected interactions requires $\LDAUOmega{\log n}$ states under some natural \inlong{monotonicity }assumptions.

The goal for \problem{Leader Election} protocols is that exactly one agent is in a designated leader state.
\Textcite{DBLP:conf/wdag/DotyS15} show that any population protocol with a constant number of states that stably elects a leader requires $\LDAUOmega{n^2}$ expected interactions, a bound matched by a natural two-state protocol.
Upper bounds for protocols with a non-constant number of states per agent were presented in~\cite{DBLP:conf/icalp/AlistarhG15, DBLP:conf/soda/AlistarhAEGR17, DBLP:conf/podc/BilkeCER17, DBLP:conf/soda/AlistarhAG18, DBLP:conf/soda/BerenbrinkKKO18, DBLP:conf/soda/GasieniecS18, DBLP:conf/spaa/GasieniecSU19, berenbrink2020}.
In \cite{DBLP:conf/soda/GasieniecS18} a \problem{Leader Election} protocol that stabilizes \whp in $\LdauOmicron{n  \log^2 n}$ interactions, using $\LdauOmicron{\log\log n}$ states (matching a corresponding lower bound~\cite{DBLP:conf/soda/AlistarhAEGR17}) is presented.
The core idea is to synchronize the agents using a \emph{phase-clock}.
The currently best known protocol for \problem{Leader Election} is due to \cite{berenbrink2020}, stabilizing in expected $\LdauOmicron{n \log n}$ interactions using $\LdauOmicron{\log\log{n}}$ states per agent.

As a tool for self-synchronization, so-called \emph{phase-clocks} have been explored in a wide range of related areas, see, \eg the seminal paper \cite{DBLP:journals/ppl/AroraDG91}.
In the population model, the concept of phase-clocks was first introduced in \cite{DBLP:journals/dc/AngluinAE08a} under the assumption that a leader is present.
These clocks were generalized in \cite{DBLP:conf/soda/GasieniecS18} to a \emph{junta} of $n^{\epsilon}$ agents.
In \cref{sec:simulations} we empirically analyze a variant of this phase-clock process.
\section{Sequential Simulation}
\label{sec:sequential-simulation}
\ifshort
\begin{algorithm}[b]
\else
\begin{algorithm}[t]
\fi
   \caption{\algStepWise: The algorithmic framework for sequential simulation.}
   \label{alg:seq-algo}

   \DontPrintSemicolon
   \SetKwInOut{Input}{input\expandafter\?\?}

   \Input{~%
      configuration $C$,
      transition function $\delta$,
      number of steps $N$
   }

   \For{$t \gets 1$ \KwTo $N$}{
    sample and remove agents $i$ and $j$ without replacement from $C$\;
    add agents in states $\delta(s_i, s_j)$ to $C$\;
   }
\end{algorithm}
 
\iflong
In the following we consider the simulation of population protocols with $n$ agents.
Given some initial configuration \initialConf, our goal is to simulate a protocol over a large number~$N$ of steps with $N \gg n$ in order to eventually obtain the final configuration~$\confAtTime N$.
\fi

As a baseline, we first consider variants of \algStepWise, a sequential approach defined in \cref{alg:seq-algo}.
It is a direct translation of the machine model discussed in \cref{sec:model}.
\algStepWise carries out $N$ steps in a fully serialized manner.
For each interaction, it selects two agents uniformly at random, computes their new states based on their current ones, and updates the configuration.

Under the realistic assumption that the transition function~$\delta$ can be evaluated in constant time, \algStepWise's runtime and memory footprint is dominated by storing, sampling from, and updating the configuration~$C$.
We therefore consider appropriate data structures.
In the population model, agents typically are anonymous, \ie we cannot distinguish two agents in the same state.
Hence, we can store a configuration~$C$ as an unordered multiset $\hat C$ and maintain multiplicities rather than individual states.\iflong\footnote{
   There also exists model variants where interactions are limited to some underlying communication network, resulting in a restricted interaction graph.
   In this setting, agents may become distinguishable based on the network structure, and we may no longer describe a configuration as a multiset of states.
   See \cref{sec:open-problems} for further discussions.
}\fi\xspace
To this end, \algStepWise requires an \emph{urn-like} data structure which efficiently supports (i) weighted sampling (with and without replacement) and (ii) adding of single agents.
In the following, we consider various data structures and their impact on the complexity of the sequential approach.

\paragraph{Array}
\algStepWiseArray maintains the configuration $C$ in an array $A[1 \ldots n]$ where $A[i]$ holds $s_i$, the state of the $i$-th agent.
Sampling with replacement is trivial, as we only draw a uniform variate $X \in [n]$ and return $A[X]$.
Sampling without replacement works analogously: we overwrite $A[i]$ with $A[n]$ and remove the array's last element $A[n]$.
Adding new elements is possible by appending. (Note that we do not grow the memory since we always store at most $n$ agents in the array.)
This leads to an $\Oh(N)$ time algorithm and a memory footprint of $\Oh(n \log \q)$ bits, which can be prohibitively large if simulating large populations in parallel.

\paragraph{Linear Search}
\algStepWiseLinear maintains the multiset $\hat C$ in an array~$A$ such that $A[i]$ holds the number of agents in state~$q_i$.
Sampling requires a linear search on~$A$ in $\Oh(\q)$ per sample.
This results in a worst-case simulation time of $\Theta(N\q)$.
Nevertheless, in practice \algStepWiseLinear is among the fastest sequential variants for small $\q$ (see \cref{sec:simulations}).
Compared to \algStepWiseArray, it has a significantly smaller memory footprint of $\Oh(\q \log{n})$ bits.

\paragraph{Binary Search Tree}
\algStepWiseBST maintains the multiset $\hat C$ using a balanced binary search tree.
The $i$-th leaf (from left to right) encodes $\hat C_i$, the number of agents in state~$i$.
Each inner node~$v$ stores the number~$\ell_v$ of agents in its left subtree.
To randomly sample an agent, we draw an integer $X$ from $\set{0, \dots, n-1}$ uniformly at random and compare it to the root's value~$\ell_r$.
If $X < \ell_r$, the sample is in the interval covered by the left sub-tree, and we descend accordingly.
Otherwise, we update $X \gets X - \ell_r$ and descend into the right subtree.
We recurse until some leaf $i$ is reached, where we emit an agent of state~$i$.

Each operation on the tree involves a simple path from the root to a leaf of length $\Theta(\log \q)$.
Since the work per level is constant, all operations take $\Theta(\log \q)$ time.
Thus, \algStepWiseBST requires $\Theta(N \log \q)$ total time and $\Oh(\q \log n)$ bits of memory.

\paragraph{\dat{s}}
\begin{figure}
    \centering
    \scalebox{0.8}{%
\begin{tikzpicture}[
   first/.style={fill=green!20, opacity=0.9},
   second/.style={fill=yellow!40, opacity=0.9},
   reject/.style={fill=red!40, opacity=0.9}
]
   \def\dx{*1em}
   \def\dy{*1.5em}

   \def\tabX{20}

   \node[anchor=south] (Fi) at (\tabX \dx + 1, 0) {$F[i]$};
   \node[anchor=south] (Si) at (\tabX \dx + 3\dx, 0) {$S[i]$};
   \node[anchor=south] (Ai) at (\tabX  \dx+ 5\dx, 0) {$A[i]$};
   \node[anchor=south, font=\footnotesize, align=center] (Rr) at (\tabX \dx + 8.5 \dx, 0) {rejection \\ probability};

   \node[first,  minimum width=2\dx, minimum height=5\dy, anchor=north west] at (\tabX \dx, 0) {};
   \node[second, minimum width=4\dx, minimum height=5\dy, anchor=north west] at (\tabX \dx + 2\dx, 0) {};
   \node[reject, minimum width=5\dx, minimum height=5\dy, anchor=north west] at (\tabX \dx + 6\dx, 0) {};

   \foreach \x in {0, ..., 10} {\path[draw, black] (\x \dx, 0) to (\x \dx, -5 \dy);}
   \foreach \x in {0, 2, 4, 6, 11} {\path[draw, black] (\tabX \dx + \x \dx, 0) to (\tabX \dx + \x \dx, -5 \dy);}

   \foreach \qi/\ni/\qii/\nii [count=\i] in {1/7/4/2, 2/5/1/1, 3/0/1/3, 4/6/1/0, 5/4/1/2}
   {
      \ifnum\ni>0
         \node[first, minimum width=\ni \dx, minimum height=1\dy, inner sep=0, anchor=north west] at (0, -\intcalcDec{\i} \dy) {$q_\qi$};
      \fi
      \ifnum\nii>0
         \node[second, minimum width=\nii \dx, minimum height=1\dy, inner sep=0, anchor=north west] at (\ni \dx, -\intcalcDec{\i} \dy) {$q_\qii$};
          \path[draw] (\ni \dx, -\intcalcDec{\i} \dy) to ++(0, -1\dy);
     \fi
      \def\ns{\intcalcAdd{\ni}{\nii}}
      \node[reject, minimum width=9 \dx - \ns \dx, minimum height=1\dy, inner sep=0, anchor=north west] at (\ns \dx, -\intcalcDec{\i} \dy) {};

      \node[anchor=center] at (\tabX\dx + 1 \dx, -\intcalcDec{\i} \dy - 0.5 \dy) {\ni};
      \node[anchor=center] at (\tabX\dx + 3 \dx, -\intcalcDec{\i} \dy - 0.5 \dy) {\nii};
      \node[anchor=center] at (\tabX\dx + 5 \dx, -\intcalcDec{\i} \dy - 0.5 \dy) {$q_\qii$};
      \node[anchor=center] at (\tabX\dx + 8.5 \dx, -\intcalcDec{\i} \dy - 0.5 \dy) {%
         \ifthenelse{\equal{\ns}{9}}{$0$}{$\nicefrac{\intcalcSub{9}{\ns}}{9}$}%
      };
   }

   \foreach \y in {0, ..., 5} {
      \path[draw, black] (     0, -\y \dy) to (10 \dx, -\y \dy);
      \path[draw, black] (\tabX \dx, -\y \dy) to (\tabX \dx + 11 \dx, -\y \dy);
   }

   \path[draw, black, thick] (6 \dx, 0) to ++(0, -5.5 \dy) node[below] {$\frac n k$};
   \path[draw, blue, very thick] (3 \dx, 0) to ++(0, -5.5 \dy) node[below] {$\alpha \frac n k$};
   \path[draw, blue, very thick] (10 \dx, 0) to ++(0, -5.5 \dy) node[below] {$\beta \frac n k$};
   \path[draw, red, thick] (3 \dx, 0.5\dy) node[above] {$R_\text{min}$} to ++(0, -5.5 \dy);
   \path[draw, red, thick] (9 \dx, 0.5\dy) node[above] {$R_\text{max}$}to ++(0, -5.5 \dy);

   \node[inner sep=0.1em] (reject) at (13\dx, -3\dx) {reject};
   \path[draw, bend left, -o] (reject) to (8\dx, -3.5\dx);

\end{tikzpicture} %
}
    \caption{%
      \dat storing $\hat C = \left(q_1\colon 13,\ q_2\colon 5,\ q_3\colon 0,\ q_4\colon 8,\ q_5\colon 4 \right)$,
      \ie $n = 30$ and $k = 5$.
      This imbalanced configuration will soon need rebuilding, \eg after the next decrease of $q_1$ in row~3, or after adding two more agents in state~$q_1$ in row~1.%
    }
    \label{fig:alias-table}
\end{figure}
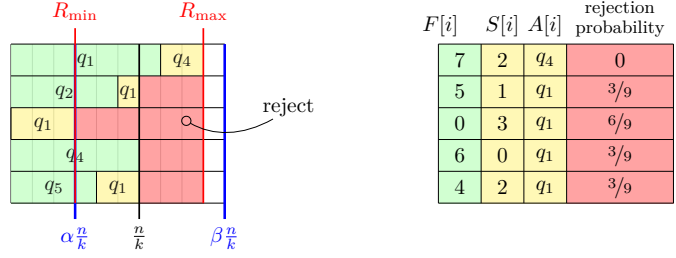

\algStepWiseAlias combines the linear runtime of \algStepWiseArray (\whp) with the small memory footprint of \algStepWiseBST, provided $\q < \sqrt n$.
At the heart of \algStepWiseAlias lies a \emph{\dat}.
This data structure encodes an urn that contains $n$ marbles, each colored with one of $k$ possible colors.
It allows us to sample marbles with and without replacement and to add new marbles of a given color in expected amortized constant time.
In the following, we present the details of the \dat and sketch a proof of \cref{thm:dynamic-alias-table}.
The full proof can be found in \cref{apx:dynamic-alias-table}.

The \dat is based on the so-called \emph{alias method} introduced by \textcite{DBLP:journals/toms/Walker77}.
The alias method allows sampling random variates $X$ with $\prob{X = i} = p_i$ for $1 \leq i \leq k$ for arbitrary finite discrete probability distributions $p_1,\dots,p_k$.
It requires two arrays, $F[1 \dots k]$ and $A[1 \dots k]$, which can be precomputed in $\Oh(k)$ time \cite{DBLP:journals/tse/Vose91}.

The two arrays define a table with $k$ rows and two entries per row.
Each row represents an equal probability mass of $1 / k$.
The first entry of the $i$-th row corresponds to element~$i$.
It is assigned a \emph{weight} $0 \le F[i] \le 1$.
The second entry, the so-called \emph{alias}, is given by $A[i]$ and has the remaining weight $1 - F[i]$.
To sample from the table, we first select row~$i$ uniformly at random.
Then, we draw a uniform variate~$X$ from $[0, 1)$.
We return element~$i$ if $X < F[i]$ (\ie with probability $F[i]$).
Otherwise, we return the alias $A[i]$ (\ie with probability $1 - F[i]$).

The \dat directly processes and stores the multiplicities of $\hat C$ as integers  rather than operating on real-valued probabilities.
In general and due to rounding errors, $n$ agents cannot be evenly distributed over $k = \q$ rows.
Hence, we introduce a second array~$S[1\dots k]$ storing the aliases' weights (see \cref{fig:alias-table}).
It is straightforward to generalize the table construction in \cite{DBLP:journals/tse/Vose91}  to our setting while keeping the original runtime of $\Oh(k)$.

Our data structure now allows us to sample elements without replacement.
Whenever we sample and remove an element, we decrement its weight using the counter of the row we sampled from.
This is always possible, since only elements with strictly positive weights can be sampled in the first place.
It also allows us to insert a new agent with state $q_i$ by simply incrementing the weight~$F[i]$ of the first element in row~$i$.

Let $R[i] = F[i] + S[i]$ denote the weight of row~$i$, and define $R_\text{min}$ and $R_\text{max}$ as smallest and largest row weights, respectively.
In contrast to the original alias method, our row weights may differ (\ie $R_\text{min} \ne R_\text{max}$).
Hence, the original sampling procedure overrespresents rows with weights smaller than $R_\text{max}$.
We remove this bias using \emph{rejection sampling} as follows.
We first select row~$i$ uniformly at random.
Then we draw a uniform variate~$X$ from $\set{0, \dots, R_\text{max}{-}1}$.
If $X < F[i]$, we emit the first element $i$, and if $F[i] \le X < R[i]$, we return the second element $A[i]$.
Otherwise, the trial is rejected and we restart the sampling process.
This ensures that the probability of returning an element from row~$i$ is $R[i] / n$.

The expected runtime complexity of sampling is $\Oh(f)$, where $f$ is the expected rejection rate with $f \le f' = R_\text{max} / R_\text{min}$.
In order to maintain a sampling time which is constant in expectation, we limit the ratio~$f'$ as follows.
After each update, we ensure that $\alpha \floor{n/k} \le R_\text{min} \le R_\text{max} \le \beta \ceil{n/k}$.
Otherwise, we rebuild the data structure.
Here, $\alpha < 1$ and $\beta > 1$ are parameters chosen such that $\beta / \alpha = \Oh(1)$.

\section{Batch Processing}
\label{sec:batch-processing}

So far, we discussed algorithms to simulate a population protocol step-by-step.
These simulators can output the population's configuration $\confAtTime{t}$ for each time step~$1 \le t \le N$.
With a time complexity of $\Oh(N)$, the simulators \algStepWiseArray and \algStepWiseAlias are optimal in this sense.
In practice, however, it often suffices to obtain a configuration snapshot every $\Theta(n)$ steps.
In this setting, we can achieve sub-constant work per interaction under mild assumptions. %

\iflong
\begin{figure}
    \centering
    \scalebox{0.95}{%
\begin{tikzpicture}
   \def\is{*4.2em}

   \node[anchor=west] at (0em, 5em) {\textbf{The original interaction sequence (\cf \cref{{sec:sequential-simulation}}):}};
   \foreach \a/\b/\k/\l [count=\i] in {3/2/1/2, 1/2/3/4, 1/1/5/6, 2/1/7/8, 3/3/9/10, 1/2/11/12, d/d/d/d, 1/1/{2\ell-1}/{2\ell}, 2/?/{2\ell+1}/6} {
      \ifthenelse{\equal{\a}{d}}{
         \node at (\i \is, 0) {$\dots$};
      }{
         \node[draw, circle, inner sep=0, minimum height=1.5em, state\a] (rqa\i) at (\i \is - 0.22 \is, 0) {$q_\a$};
         \node[draw, circle, inner sep=0, minimum height=1.5em, state\b] (rqb\i) at (\i \is + 0.22 \is, 0) {$q_\b$};
         \node[anchor=south] at (rqa\i.north) {$s_{i_{\k}}$};
         \node[anchor=south, onif={<\equal{\l}{6}> conflict}] at (rqb\i.north) {$s_{i_{\l}}$};
      }
   }

   \node[draw, minimum width=0.5\is, minimum height=0.5\is, anchor=north] (trans) at (3\is, -0.5\is) {$\delta$};
   \path[draw, semithick, green, arrows={-\arrowHead}] (rqa3) to (trans);
   \path[draw, semithick, green, arrows={-\arrowHead}] (rqb3) to (trans);

   \path[draw, red, semithick, arrows={-\arrowHead}] (trans.east) -| node[pos=0.25, black, font=\small, align=center] {
      the updated state of the agent drawn twice is\\
      known only after the $\delta$ was evaluated} (rqb9);
   \path[draw, red, semithick, arrows={-\arrowHead}, bend right] (trans.east) to (rqb3);
   \path[draw, red, semithick, arrows={-\arrowHead}, bend left] (trans.west) to (rqa3);

   \path[draw] ($(rqa1.west) + (-0.3em, 0)$) to ++(0, 2.5em) -| node [pos=0.25, above, yshift=-0.3em] {
      the $\ell$ independent interactions can be rearranged arbitrarily} ($(rqb8.east) + (0.3em, 0)$);

\node[anchor=west] at (0em, -6em) {\textbf{After sorting state pairs:}};
   \node[anchor=north, minimum width=2\is, minimum height=16em, fill=black!5, align=center] at (8.5\is, -6em) {\ \\[0.5em] \small special treatment\\[-0.4em] \small for collision};

   \foreach \a/\b/\k/\l [count=\i] in {1/1/20/21, 1/1/{2\ell-1}/{2\ell}, 1/2/3/4, 1/2/11/12, 2/1/7/8, d/d, 3/3/9/10, 1/1/5/6, 2/?/2\ell+1/6} {
      \ifthenelse{\equal{\a}{d}}{
         \node at (\i \is, -9em) {$\dots$};
      }{
         \node[draw, circle, inner sep=0, minimum height=1.5em, state\a] (rrqa\i) at (\i \is - 0.22 \is, -9em) {$q_\a$};
         \node[draw, circle, inner sep=0, minimum height=1.5em, state\b] (rrqb\i) at (\i \is + 0.22 \is, -9em) {$q_\b$};
         \node[anchor=south] at (rrqa\i.north) {$s_{i_{\k}}$};
         \node[anchor=south, onif={<\equal{\l}{6}> conflict}] at (rrqb\i.north) {$s_{i_{\l}}$};
      }
   }

   \node[draw, minimum width=0.5\is, minimum height=0.5\is, anchor=north] (trans1) at (8\is, -0.5\is - 9em) {$\delta$};
   \path[draw, red,   semithick, arrows={-\arrowHead}] (trans1.east) -| (rrqb9);
   \path[draw, green, semithick, arrows={-\arrowHead}] (rrqa8) to (trans1);
   \path[draw, green, semithick, arrows={-\arrowHead}] (rrqb8) to (trans1);
   \path[draw, red,   semithick, arrows={-\arrowHead}, bend right] (trans1.east) to (rrqb8);
   \path[draw, red,   semithick, arrows={-\arrowHead}, bend left]  (trans1.west) to (rrqa8);

   \node[anchor=west] at (0em, -15em) {\textbf{After merging interactions with identical state pairs:}};

   \foreach \a/\b [count=\i] in {-/1, -/1, -/2, -/2, -/3, -/3, -/d, 1/1, 2/?} {
      \ifthenelse{\equal{\a}{d}}{
         \node at (\i \is, -17em) {$\dots$};
      }{\ifthenelse{\equal{\a}{-}}{}{
            \node[draw, circle, inner sep=0, minimum height=1.5em, state\a] (gqa\i) at (\i \is - 0.22 \is, -17em) {$q_\a$};
            \node[draw, circle, inner sep=0, minimum height=1.5em, state\b] (gqb\i) at (\i \is + 0.22 \is, -17em) {$q_\b$};
      }}
   }

   \foreach \a/\y in {1/0, 2/2,3/4} {
      \foreach \b [count=\i] in {1, 2, 3} {
         \node[draw, circle, inner sep=0, minimum height=1.5em, state\a, opacity=0.5] (gqa\i\a) at (2* \i \is - 0.22 \is, -17em - \y em) {$q_\a$};
         \node[draw, circle, inner sep=0, minimum height=1.5em, state\b, opacity=0.5] at (2* \i \is + 0.22 \is, -17em - \y em) {$q_\b$};
         \node[anchor=east] at (gqa\i\a.west) {$\mathbf{d_{\a\b}}$ \textcolor{black!50}{$\times$}};
   }}

   \node at (1.5em, -19em) {$D = $ \huge $\Bigg($};
   \node at (28em, -19em) {\huge $\Bigg)$};

   \node[draw, minimum width=0.5\is, minimum height=0.5\is, anchor=north] (trans1) at (8\is, -0.5\is - 17em) {$\delta$};
   \path[draw, red,   semithick, arrows={-\arrowHead}] (trans1.east) -| (gqb9);
   \path[draw, green, semithick, arrows={-\arrowHead}] (gqa8) to (trans1);
   \path[draw, green, semithick, arrows={-\arrowHead}] (gqb8) to (trans1);
   \path[draw, red,   semithick, arrows={-\arrowHead}, bend right] (trans1.east) to (gqb8);
   \path[draw, red,   semithick, arrows={-\arrowHead}, bend left]  (trans1.west) to (gqa8);
\end{tikzpicture} %
}
    \caption{%
        Batch processing uses collision-free runs, long sequences of independent interactions, which can be rearranged and grouped together.%
	}
    \label{fig:batched-interactions}
\end{figure}
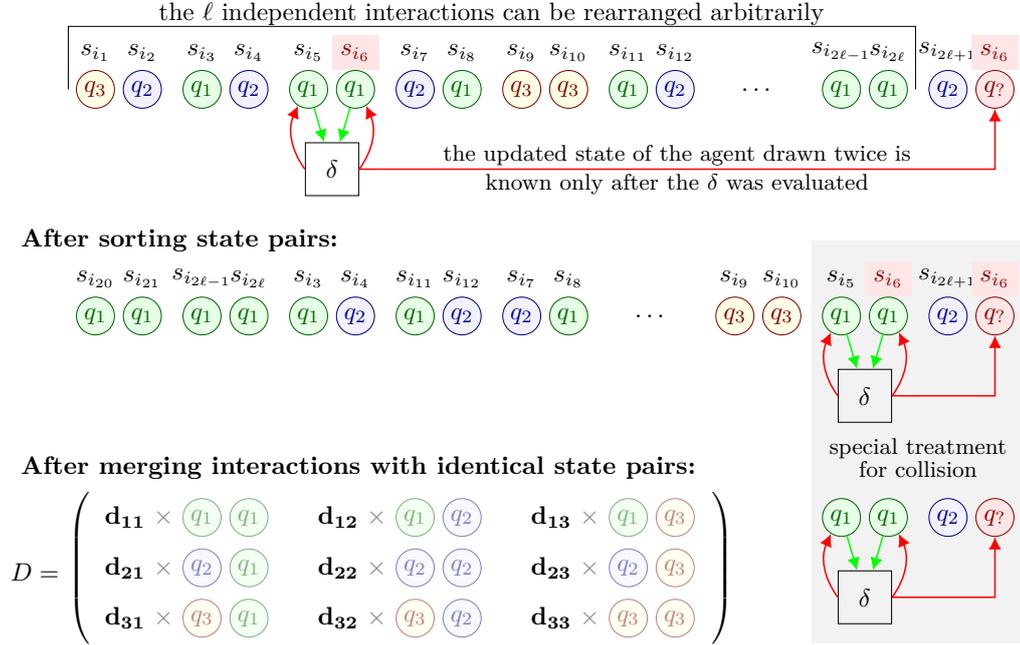

\ifshort
\begin{algorithm}[H]
\else
\begin{algorithm}[t]
\fi
	\caption{\algSimple: The algorithmic framework for simulation in batches.}
	\label{alg:seq-simple}

	\SetKwComment{Comment}{$\triangleright$\ }{}
	\DontPrintSemicolon
	\SetKwInOut{Input}{input\expandafter\?\?}

	\footnotesize

	\Input{~%
		configuration $C$,
		transition function $\delta$,
		number of steps $N$
	}

	$t \gets 0$\;
	\While{t < N}{

	$\ell \gets$ sample length of a collision-free run\;
	\smallskip
	let $D = (d_{ij})$ be a $\q \times \q$ matrix and sample $d_{ij}$ as \Comment*[r]{batch processing} the number of interactions~$(q_i,q_j)$ among $\ell$ interactions\;
	\smallskip
	let $C'$ be an empty configuration\;
	\ForEach{$(q_i, q_j) \in Q^2$}{
		remove from $C$: $d_{ij}$ agents in states $q_i$, and $d_{ij}$ agents in states $q_j$\;
		$(q'_i, q'_j) \gets \delta(q_i, q_j)$\;
		add to $C'$: $d_{ij}$ agents in states $q'_i$, and  $d_{ij}$ agents in states $q'_j$\;
	}

		\eIf(\Comment*[f]{plant a collision}){$\ell$ is even}{
			sample agent $c_1$ without replacement from $C'$\Comment*[r]{collision at $c_1$}
			merge $C'$ into $C$\;
			sample agent $c_2$ without replacement from $C$\;
		}
		{
			sample agent $c_1$ without replacement from $C$\;
			sample agent $c_2$ without replacement from $C'$\Comment*[r]{collision at $c_2$}
			merge $C'$ into $C$\;
		}
		add agents $\delta(c_1,c_2)$ to $C$\;
	$t \gets t + \ell + 1$\;
	}

	\normalsize
\end{algorithm}
 \fi

Recall that \algStepWiseBST has a small memory footprint but a sub-optimal time complexity of $\Theta(N \log {\q})$.
Observe, however, that the underlying binary search tree can update the multiplicity of any existing state in time $\Oh(\log \q)$ \emph{independently} of the changed quantity.
Here, we introduce the new algorithm \algSimple (see \cref{alg:seq-simple} \inshort{in \cref{apx:batch-processing}}) to exploit this observation.
The algorithm uses a binary search tree to store the configuration.
It updates $\Omega(\sqrt n)$ agents in expectation with each access and therefore reduces the time complexity to $\Oh(N (\log n + \q^2) / \sqrt n)$ which is $\oh(N)$ for $\q = \oh(n^{1/4})$ and $N = \Theta(\poly(n))$.

\paragraph{Batching interactions}
In order to coalesce individual updates into batches, \algSimple uses the notion of \emph{collision-free runs} as illustrated in \cref{fig:batched-interactions} \inshort{in \cref{apx:batch-processing}}.
We interpret the execution of a protocol as a sequence $i_1, i_2, \ldots$ where at time $t$ agents $i_{2t-1}$ and $i_{2t}$ interact.
Let $\ell$ be the largest index such that all $i_1, \ldots, i_\ell$ are distinct.
Then, the first $\ellHalf$ interactions are independent of each other and can be rearranged in any order.
We refer to them as a collision-free run of length~$\ell$.
If $\ell$ is odd, the first agent of the $(\ellHalf+1)$-th interaction is also considered collision-free.
Since we are free to reorder the interactions, we can group all interactions of states $(q_i, q_j)$ together, evaluate~$\delta(q_i, q_j)$, and update all accordingly affected states in one step.

Now instead of sampling a sequence of agents and partitioning the sequence into collision-free runs, we take the opposite direction.
We first sample only the length~$\ell$ of a collision-free run from the appropriate probability distribution (see below).
Then, we randomly match $\ell$ agents as discussed below.
Finally, we reuse one of the agents from the matching in order to plant a collision.
These steps are repeated until at least $N$ interactions are simulated.

\paragraph{Matching Agents}
We simulate sampling $\ell$ agents without replacement to construct a collision-free run of length~$\ell$.
While we cannot afford to draw the agents individually, we only need to know how many interactions~$n_{ij}$ of each state pair $(q_i, q_j)$ we encountered.
Thus, a run can be modeled by a $\q {\times} \q$ matrix $D = (n_{ij})$ with $\sum_{ij} n_{ij} = \ellHalf$. (If $\ell$ is odd, we remove one agent and treat it individually.)

To obtain $D$, we first sample the row sums $D_i = \sum_j n_{ij}$ of the matrix from a multivariate hypergeometric distribution.
This simulates sampling $\ellHalf$ initiating agents without replacement.
We then sample values within each row analogously to find the matching responding agents.
Sampling $D$ takes $\Oh(\q^2)$ time in total since each individual sample from a hypergeometric distribution can be computed in $\Oh(1)$ time \cite{DBLP:conf/wsc/Stadlober89}.

For correctness, note that our sampling approach corresponds to first selecting \ellHalf agents as initiators and and then \ellHalf agents as responders.
That is, we first sample agents $i_1, i_3, \dots, i_{2\ellHalf - 1}$ and then agents $i_2, i_4, \dots, i_{2\ellHalf}$ (instead of the natural interleaved variant $i_1, i_2, \dots, i_{2\ellHalf}$).
Since, each draw is taken uniformly at random, the permutation does not change the distribution (see \cref{app:reordering-argument} for a formal proof).

\paragraph{Length of a Collision-Free Run}
In the following, we analyze the length~$\ell$ of a collision-free run.
Observe that the following analysis is similar to the analysis of a generalized variant of the birthday problem \cite{DBLP:conf/sacrypt/KuhnS01}.
We consider a generalization which we also use in \cref{sec:merging-runs}.
We assume that $r$ agents have already interacted and ask how many more collision-free agents can be added.
Formally we define the distribution $\colDistr{n}{r}$ as follows.

\begin{definition}
Consider a sequence $a_1, a_2, \ldots$ of agents sampled independently and uniformly at random.
Let $A_0$ be a set of $r$ initially prescribed agents and let $A_i = A_{i-1} \cup \{a_i\}$ be the set of agents after $i$ draws.
We define the random variable $\ell$ as the smallest index s.t.\ $a_\ell \in A_{\ell - 1}$.
We say $\ell \distrAs \colDistr{n}{r}$, where $n$ is the total number of agents and $r$ is the number of prescribed agents.
\end{definition}

\ifshort
In \cref{sec:sampling-length-collision-free-run} we discuss how we can sample from this distribution using the inverse sampling technique~\cite{DBLP:books/sp/Devroye86}.
We find that sampling~$\ell$ takes $\Oh(\log n)$ time.
In practice, this is comparable to the time it takes to sample a hypergeometric random variate. In the following, we show basic properties of \colDistr{n}{r} in order to show bounds on the runtime of \algSimple.
\fi

\begin{lemma}[name=,restate=restateLemCollisionDistance,label=lem:collision-distance]
Let $\ell \distrAs \colDistr{n}{r}$.
\ifshort
	Then $\ell$ has support $\set{1, \dots, n - r}$ and distribution\par
$\displaystyle \prob{\ell = k} =  n^{- (k + 1)} (n - r)! (r + k) / (n - r - k)! $.
\else
	Then $\ell$ has distribution
	\begin{equation*}
		\prob{\ell = k} = \begin{cases}
		 \displaystyle n^{- (k + 1)} \frac{(n - r)!}{(n - r - k)!} (r + k) & \text{if } 0 < k \le n - r \\
		 0 & \text{otherwise.}
		\end{cases}
	\end{equation*}
\fi
\end{lemma}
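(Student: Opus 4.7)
The plan is to compute $\prob{\ell = k}$ by a direct multiplicative decomposition of the event $\{\ell = k\}$ along the sequence of draws $a_1, a_2, \ldots$, exploiting that the $a_i$ are i.i.d.\ uniform over the $n$ agents. First I would characterise $\{\ell = k\}$ as a conjunction of several ``fresh draw'' events together with a single terminating ``collision'' event. Because each event depends only on $a_i$ given $A_{i-1}$, and because $|A_{i-1}|$ is \emph{deterministic} once the earlier draws are all known to be fresh, the chain rule for conditional probability applies with no independence subtlety.

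Second, I would evaluate each conditional probability. Conditional on all earlier draws being fresh, $|A_{i-1}| = r + (i-1)$, so the probability that $a_i$ is fresh equals $(n - r - i + 1)/n$, while the terminating draw against a set of size $r + k$ collides with probability $(r + k)/n$. Multiplying these contributions, the product $\prod_{i=1}^{k}(n - r - i + 1)$ telescopes into the falling factorial $(n-r)! / (n-r-k)!$, yielding
\[
  \prob{\ell = k}
  \;=\; \frac{1}{n^{k+1}}\,(r+k)\prod_{i=1}^{k}(n - r - i + 1)
  \;=\; \frac{(n-r)!\,(r+k)}{n^{k+1}\,(n-r-k)!}.
\]

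Third, I would justify the support. After $n - r$ successive fresh draws the set $A$ has exhausted the entire population of $n$ agents, so no additional fresh draws are possible and the next draw collides with probability $1$; this caps $k$ at $n - r$. The lower boundary together with non-negativity of each factor is immediate from the closed form. As a sanity check, I would verify that the pmf sums to $1$ over its support, which follows from a short telescoping identity.

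The only real obstacle here is bookkeeping: keeping the index of ``the $k$-th fresh draw'' aligned with ``the draw that causes the collision'' (which differs by one), and matching the resulting exponents and factorial indices to those in the claimed closed form. Conceptually, the argument is a routine application of the chain rule, essentially identical to the classical birthday-problem computation with a prescribed seed set of size $r$.
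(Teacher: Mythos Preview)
Your proposal is correct and takes essentially the same approach as the paper. The paper dresses the computation in an urn analogy (green marbles for untouched agents, red marbles for previously seen ones) rather than invoking the chain rule explicitly, but the resulting product $\prod_{i=0}^{k-1}\frac{(n-r)-i}{n}\cdot\frac{r+k}{n}$ is identical to yours; your additional remarks on the support and the telescoping sanity check are extras the paper omits.
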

\def\proofCollisionDistance{%
	\begin{proof}
		Consider an urn with $n$ marbles.
		Initially, $r$ marbles are red, while the remaining $n - r$ marbles are green.
		We now take out one marble at a time:
		if it is green, we keep on going (think of a traffic light) and put a red one back in.
		If we take a red marble, we stop.
		Observe that the number of marbles we take out is exactly $\ell$ as above, as the green marbles represent new unconsidered agents while the red ones represent agents in $A_{\ell - 1}$.
		This directly leads to the acclaimed distribution:
		\begin{equation*}
		\prob{\ell = k}\quad = \quad
			\underbrace{\prod_{i=0}^{k - 1} \frac{(n - r) - i}{n}}_{\text{select $k$ out of $n - r$}}
			\quad \cdot \quad
			\underbrace{\frac{r + k}{n}}_\text{$(k{+}1)$-th is red} \qedhere
		\end{equation*}
	\end{proof}
}
\iflong\proofCollisionDistance\fi

\inlong{\goodbreak}

\begin{lemma}[name=,restate=restateLemExpectedBatchLength,label=lem:expected-batch-length]
	Let $\ell \sim \colDistr{n}{0}$. Then $\Ex{\ell} = \Theta(\sqrt{n})$.
\end{lemma}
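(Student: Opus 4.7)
The plan is to recognize $\colDistr{n}{0}$ as the classical birthday-collision distribution and reduce the expectation bound to standard estimates for such a sum. Concretely, I would first use the identity $\Ex{\ell} = \sum_{k \ge 1} \prob{\ell \ge k}$ and express the tail probability via a telescoping ``all distinct so far'' product. From \cref{lem:collision-distance} (or directly from the urn description: $\ell \ge k$ iff the first $k$ draws are all green), I would derive
\[
\prob{\ell \ge k} \;=\; \prod_{i=0}^{k-1}\Bigl(1 - \tfrac{i}{n}\Bigr),
\]
so that $\Ex{\ell} = \sum_{k=1}^{n} \prod_{i=0}^{k-1}(1 - i/n)$.

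For the upper bound $\Ex{\ell} = \Oh(\sqrt n)$, I would apply the standard inequality $1-x \le e^{-x}$ term-wise to get $\prob{\ell \ge k} \le \exp\!\bigl(-\tfrac{k(k-1)}{2n}\bigr)$. Splitting the sum at $k_0 = \lceil c\sqrt n \rceil$ and bounding the tail by a Gaussian-like integral $\int_0^{\infty} e^{-x^2/(2n)}\,\mathrm{d}x = \sqrt{\pi n / 2}$ gives the desired $\Oh(\sqrt n)$.

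For the matching lower bound $\Ex{\ell} = \Omega(\sqrt n)$, I would restrict the sum to $k \le \lfloor\sqrt n/2\rfloor$ and show that each term is $\Omega(1)$. A convenient route is the union bound estimate $\prob{\ell \ge k} \ge 1 - \sum_{i=1}^{k-1} i/n \ge 1 - \binom{k}{2}/n \ge 3/4$ for $k \le \sqrt{n/2}$, which immediately yields $\Ex{\ell} \ge \tfrac{3}{4}\lfloor\sqrt{n/2}\rfloor = \Omega(\sqrt n)$. (Alternatively, $\prod_{i=0}^{k-1}(1-i/n) \ge \exp(-\sum_{i=0}^{k-1}\tfrac{i}{n-i})$ gives the same thing with a slightly less elementary argument.)

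I do not expect any real obstacle here: the result is morally the birthday paradox. The only thing to watch for is the minor index-shift between the ``first-collision index'' and the ``length of the collision-free prefix'' in \cref{lem:collision-distance}, which affects the expectation only by an additive $1$ and hence not at all the $\Theta(\sqrt n)$ rate.
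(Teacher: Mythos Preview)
Your proposal is correct and shares the paper's overall skeleton: write $\Ex{\ell} = \sum_{k}\prob{\ell\ge k}$ with $\prob{\ell\ge k}=\prod_{i=0}^{k-1}(1-i/n)$, then bound the sum from above and below by splitting around order $\sqrt n$. The technical estimates differ, however. For the upper bound the paper replaces each factor $(1-j/n)$ in the tail $i\ge\sqrt n$ by the uniform value $(1-1/\sqrt n)$ and sums the resulting geometric series to get $\Ex{\ell}\le 2\sqrt n$; you instead apply $1-x\le e^{-x}$ and compare to a Gaussian integral. For the lower bound the paper again reduces to a geometric series (bounding each factor by $1-1/\sqrt n$ and using $(1-1/\sqrt n)^{\sqrt n}\le e^{-1}$), while you use the Weierstrass/union-bound inequality $\prod(1-i/n)\ge 1-\binom{k}{2}/n$ on the first $\sqrt{n/2}$ terms. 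Both routes are standard birthday-problem arguments; the paper's is slightly more elementary (no exponentials or integrals, just geometric series), whereas yours gives a cleaner explicit constant in the lower bound and more directly matches the textbook birthday analysis. Either is perfectly adequate here.
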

\def\proofExpectedBatchLength{%
	\begin{proof}
		We first upper bound $\Ex{\ell} = \Oh(\sqrt{n})$ and then give a matching lower bound $\Ex{\ell} = \Omega(\sqrt{n})$.
		In both cases, we write $\Ex{\ell} = \sum_{i=0}^{n}\prob{\ell \ge i}$ and split the sum at $\sqrt n$.
		Then we bound both terms appropriately.
		Observe that for some fixed value $i$ we have $\Prob{\ell \geq i} = \prod_{j = 0}^{i - 1}\left(1 - {j}/{n}\right)$.
		For the upper bound on $\Ex{\ell}$ we get
		\begin{align*}
			\Ex{\ell}
			&= \sum_{i=0}^{n}\prob{\ell \ge i}
			 = \sum_{i=0}^{n}\prod_{j = 0}^{i - 1}\left(1 - \frac{j}{n}\right)
			 \le \sum_{i=0}^{\sqrt n - 1} 1 + \sum_{i = \sqrt n}^{\infty}\left(1 {-} \frac{\sqrt n}{n}\right)^i
			 \leq 2 \sqrt n .
		\intertext{Similarly, we get for the lower bound on $\Ex{\ell}$ that}
		  \Ex{\ell}
		  &=  \sum_{i=0}^{n}\prob{\ell \ge i}
		   =  \sum_{i = 0}^{n}\prod_{j=0}^{i-1}\left(1 - \frac{j}{n}\right)
		  \ge \sum_{i = 0}^{\sqrt{n}}\prod_{j=0}^{i-1}\left(1 - \frac{\sqrt n}{n}\right)
		   =  \sum_{i = 0}^{\sqrt{n}}\left(1 - \frac{1}{\sqrt n}\right)^i
\\	  &=  \sqrt n \left( 1 - \left(1 - \frac{1}{\sqrt n}\right)^{\sqrt n + 1} \right)
      \ge \sqrt n \left( 1 - e^{-1} \right) .
		\end{align*}
Therefore we have $\Ex{\ell} = \Theta(\sqrt{n})$.
	\end{proof}
}
\iflong\proofExpectedBatchLength\fi

\noindent Using \cref{lem:expected-batch-length}, we are now ready to bound the runtime and space complexity of \algSimple.

\begin{theorem} Let $n$ be the number of agents and $\q$ the number of states.
	\algSimple simulates $N$ interactions in $\Oh(N (\q^2  + \log n ) / \sqrt{n})$ expected time using $\Theta(\q \log{n})$ bits.
\end{theorem}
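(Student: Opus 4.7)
The plan is to split the analysis into three pieces: the expected number of outer iterations of \cref{alg:seq-simple}, the worst-case cost of a single iteration, and the space footprint. For the iteration count, observe that the batch lengths $L_1, L_2, \dots$ drawn in successive iterations are i.i.d.\ copies of $\colDistr{n}{0}$, so by \cref{lem:expected-batch-length} we have $\Ex{L_i} = \Theta(\sqrt{n})$. Each iteration consumes exactly $L_i + 1$ interactions; letting $M$ be the number of iterations needed to reach $N$ interactions in total, a standard application of Wald's identity (using $L_M \le n$ to control the overshoot) yields $\Ex{M} = \Oh(N/\sqrt{n} + 1)$.

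For the per-iteration cost I would charge the four phases separately. Sampling $\ell$ costs $\Oh(\log n)$ by the inverse-CDF method discussed above. Drawing $D$ decomposes into $\q$ row-sum samples plus $\q$ column samples per row, for $\Oh(\q^2)$ univariate hypergeometric draws at $\Oh(1)$ each~\cite{DBLP:conf/wsc/Stadlober89}. The transition sweep iterates over at most $\q^2$ nonzero entries of $D$ and evaluates $\delta$ once per entry. Planting the collision is $\Oh(\log \q)$ further BST work. The delicate step is the BST update cost: the naive strategy---one tree update per nonzero entry---would cost $\Omega(\q^2 \log \q)$, which is too slow. The key observation is that although $D$ has $\q^2$ entries, a single batch touches only the $\q$ leaves of the BST; I would therefore accumulate the net per-state multiplicity change in a length-$\q$ scratch array during the sweep and only afterwards issue $\Oh(\q)$ leaf updates, for a total of $\Oh(\q \log \q) = \Oh(\q^2)$. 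Summing the four phases gives $\Oh(\q^2 + \log n)$ per iteration, and multiplying with $\Ex{M}$ yields the claimed $\Oh(N(\q^2 + \log n)/\sqrt{n})$ expected runtime.

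Space is dominated by the balanced BSTs storing $C$ and its staging counterpart $C'$, each with $\q$ leaves and counters in $[0, n]$, for $\Theta(\q \log n)$ bits. The matrix $D$ is never materialized in full: one row is generated and folded into the scratch array and $C'$ before the next row is drawn, adding only $\Oh(\q \log n)$ bits. Hence the overall space remains $\Theta(\q \log n)$. The main conceptual obstacle is thus the aggregation trick for BST updates in the transition phase; the remaining pieces are routine once Wald's identity is combined with \cref{lem:expected-batch-length}.
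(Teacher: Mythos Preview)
Your argument is correct and mirrors the paper's proof: expected batch length $\Theta(\sqrt n)$ via \cref{lem:expected-batch-length}, per-batch cost $\Oh(\q^2+\log n)$, and space from the BST. You are in fact more careful than the paper on two points it glosses over---you invoke Wald's identity where the paper simply divides $N$ by the expected batch size, and your aggregation trick (together with the row-by-row handling of $D$) is what actually keeps the update and space costs at $\Oh(\q^2)$ and $\Theta(\q\log n)$ rather than the naive $\Oh(\q^2\log\q)$ and $\Theta(\q^2\log n)$.
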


\begin{proof}
	According to \cref{lem:expected-batch-length}, each batch simulates $\Theta(\sqrt{n})$ interactions in expectation.
	It takes $\Oh(\log n)$ time to sample the length of a collision-free run~$\ell$ (see \cref{sec:sampling-length-collision-free-run})  and $\Oh(\q^2)$ time (\cf \cite{DBLP:conf/wsc/Stadlober89}) to sample the interaction numbers and process the interactions for all pairs of states.
	This implies the runtime complexity.
	The space complexity follows immediately from the binary search tree used to store the configuration.
\end{proof}
\section{Merging Batches}
\label{sec:merging-runs}

In an empirical evaluation, we found that our implementation of algorithm~\algSimple spends most time in the batch processing step (to sample and transition the $\q \times \q$ matrix~$D$);
this is especially true for complex protocols with non-trivial state space sizes.
As the matrix sampling cost is independent of the length~$\ell$ of the underlying collision-free run, we modify the algorithm to support more than one collision per batch processing step.

\paragraph{Introducing Epochs}
An execution of the improved algorithm~\algMulti logically consists of several epochs.
For each epoch, the algorithm samples the \emph{lengths} $\ell_1$, $\ell_2$, \dots{}, $\ell_\rho$ of multiple collision-free runs $R_1, \dots, R_\rho$.
As no agent may appear twice in the union of those collision-free sequences, later runs become shorter in expectation ($\Ex{\ell_{i+1}} < \Ex{\ell_{i}}$), naturally limiting the number~$\rho$ of runs per epoch.
After each run $R_i$, we plant one collision, \ie an interaction with an agent that was already considered in the current epoch.
An epoch concludes with a single batch processing step, in which matrix~$D$ is sampled and processed analogously to algorithm~\algSimple.

\paragraph{Tracking Dependencies}
While algorithm~\algSimple only reorders and groups together independent interactions, our improved algorithm~\algMulti{} delays most interactions until the end of the epoch.
To do so, the algorithm conceptually assigns each agent one of three types, and updates these labels as it progresses through the epoch:
\begin{itemize}
	\item \untouched agents did not interact in the current epoch.
	Hence, all agents are labeled \untouched at the beginning of an epoch.

	\item \updated agents took part in at least one interaction that was already evaluated.
	Thus, \updated agents are already assigned their most recent state.

	\item \delayed agents took part in \emph{exactly one} interaction that was not yet evaluated.
	Thus, \delayed agents are still in the same state they had at the beginning of the epoch, but are scheduled to interact at a later point in time.
	We additionally require that their interaction partner is also labeled \delayed.
\end{itemize}

Analogously to algorithm~\algSimple, we maintain two urns~$C$ and $C'$.
Urn $C'$ contains \updated agents, while urn~$C$ stores \untouched and \delayed agents (or in other words, all agents whose state was not updated in the current epoch).
At any point in time, an agent is either in $C$ or $C'$ meaning that $|C| + |C'| = n$.
Due to symmetry, we do not explicitly differentiate \untouched from \delayed agents.
We rather maintain only the number~$T$ of \delayed agents and lazily select them while planting collisions or during batch processing.

If a \delayed agent~$a$ is selected while planting a collision, it takes part in a second interaction and ---by definition--- cannot be labeled \delayed any more.
Thus, we randomly draw a second \delayed agent~$b$, evaluate their transition, store the updated state of~$b$ in~$C'$, and directly evaluate~$a$ again in the planted collision.
Finally, we decrease $T \gets T-2$ as agents $a$ and $b$ changed their labels from \delayed to \updated.
Observe that we might repeat this step in the (unlikely) case that a planted collision involved two formerly \delayed agents.

\paragraph{Length of an Epoch}
We now analyze the length of an epoch.
We start by extending the analysis of $\colDistr{n}{r}$ to the $r = \Omega(\sqrt n)$ regime (reached after $\Oh(1)$ runs \whp).
The following lemmas establish expected value and concentration. \inshort{See \cref{sec:length-of-a-collision-free-run} for the proofs.}

\begin{lemma}[restate=restateExpectedBatchLengthPrescribed,label=lem:expected-batch-length-prescribed]
	Let $\ell \sim \colDistr{n}{r}$ and $r = \Omega(\sqrt{n})$. Then $\Ex{\ell} = \Theta(n/r)$.
\end{lemma}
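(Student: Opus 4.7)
The plan is to prove both inequalities $\Ex{\ell} = \Oh(n/r)$ and $\Ex{\ell} = \Omega(n/r)$ separately, starting from the tail-sum representation
\[
   \Ex{\ell} \;=\; \sum_{k \ge 0} \Prob{\ell > k} \;=\; \sum_{k \ge 0} \prod_{i=0}^{k-1}\left(1 - \frac{r+i}{n}\right),
\]
where the product form is the straightforward generalisation of the identity derived in the proof of \cref{lem:collision-distance}: after $k$ collision-free draws, the urn contains $r+k$ ``red'' marbles, so the probability of surviving one more draw decreases term by term.

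For the upper bound, I would bound each factor by $(1-(r+i)/n) \le 1 - r/n \le e^{-r/n}$ and collapse the tail sum into a geometric series,
\[
   \Ex{\ell} \;\le\; \sum_{k \ge 0} e^{-k r/n} \;=\; \frac{1}{1 - e^{-r/n}} \;\le\; \frac{2n}{r},
\]
where the last step uses $1 - e^{-x} \ge x/2$ for $x \in [0,1]$ (the case $r \ge n$ is trivial, as $\Ex{\ell} \le 1$). This bound holds for every $r \ge 1$, so it needs no special treatment in the $r = \Omega(\sqrt n)$ regime.

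For the lower bound, I would truncate the sum at $K = \lfloor n/(2r) \rfloor$ (with the corner case $r \ge n/2$ handled by noting $\Ex{\ell} \ge \Prob{\ell \ge 1} = 1 = \Omega(n/r)$). For $k \le K$, I would apply the inequality $1-x \ge e^{-2x}$ (valid for $x \le 1/2$) to get
\[
   \prod_{i=0}^{k-1}\!\Bigl(1 - \tfrac{r+i}{n}\Bigr) \;\ge\; \exp\!\Bigl(-\tfrac{2}{n}\sum_{i=0}^{k-1}(r+i)\Bigr) \;=\; \exp\!\Bigl(-\tfrac{2kr}{n} - \tfrac{k(k-1)}{n}\Bigr).
\]
The first term in the exponent is at most $1$ by the choice of $K$. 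The second term is at most $K^2/n \le n/(4r^2)$, which is $\Oh(1)$ precisely because $r = \Omega(\sqrt n)$. Thus every summand is $\Omega(1)$, and summing $K = \Theta(n/r)$ such terms yields $\Ex{\ell} = \Omega(n/r)$.

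The main obstacle is the lower bound: the naive estimate $(1-(r+i)/n) \ge 1 - r/n$ loses the growth of $r+i$ with $i$, and ignoring it would still give $\Theta(n/r)$ only as long as $K^2 \ll n$. The assumption $r = \Omega(\sqrt n)$ is exactly what ensures $K = \Oh(\sqrt n)$ and hence the drift contribution $k(k-1)/n$ stays bounded over the truncation window — this is the point where the hypothesis enters and must be used explicitly, and it also explains why the lemma fails for significantly smaller $r$ (compare with \cref{lem:expected-batch-length}, where $r = 0$ gives $\Theta(\sqrt n)$ rather than~$\infty$).
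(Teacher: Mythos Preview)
Your proof is correct and follows essentially the same approach as the paper: both use the tail-sum representation, bound the upper tail by a geometric series with ratio $1-r/n$ (the paper gets $n/r$ directly rather than going through $e^{-r/n}$), and for the lower bound truncate the sum and invoke $r=\Omega(\sqrt n)$ to keep the quadratic drift term bounded. The only cosmetic difference is the truncation point: you cut at $\lfloor n/(2r)\rfloor$ and show each retained term is $\Omega(1)$, whereas the paper cuts at $r-1$, uses the uniform bound $1-(r+j)/n\ge 1-2r/n$ for $j<r$, and evaluates the resulting geometric sum in closed form as $\frac{n}{2r}\bigl(1-e^{-2r^2/n}\bigr)$.
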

\def\proofExpectedBatchLengthPrescribed{%
	\begin{proof}
		The proof follows analogously to \cref{lem:expected-batch-length}.
		Again, we start with the upper bound.
		\begin{align*}
		\Ex{\ell} &= \sum_{i=0}^{n-r}\prob{\ell \ge i} = \sum_{i = 0}^{n - r}\prod_{j = 0}^{i - 1}\left(1 - \frac{j + r}{n}\right) \le \sum_{i = 0}^{\infty}\left(1 - \frac{r}{n}\right)^i = \frac{n}{r}.
		\end{align*}
		For the lower bound we derive a general result for arbitrary $r$.
		\begin{align*}
		\Ex{\ell} &= \sum_{i=0}^{n-r}\prob{\ell \ge i} = \sum_{i = 0}^{n - r}\prod_{j = 0}^{i - 1}\left(1 - \frac{j + r}{n}\right) \ge \sum_{i = 0}^{r-1}\prod_{j = 0}^{i - 1}\left(1 - \frac{j + r}{n}\right) \ge \sum_{i=0}^{r-1}\left(1 - \frac{2r}{n}\right)^i \\
		&= \frac{n}{2r} \left(1 - \left(1 - \frac{2r}{n}\right)^{r}\right) \ge \frac{n}{2r}(1 - e^{-2r^2/n}).
		\end{align*}
		The last inequality holds since $e^{-2r^2/n}$ constitutes an upper bound for $(1 - 2r/n)^r$ as it can be rewritten as $(1 - 2r/n)^{n\cdot r/n}$ and $(1 - 2r/n)^n \le e^{-2r}$.
		For $r = \Omega(\sqrt{n})$ the second factor $(1 - \exp(-2r^2/n))$ is $\Omega(1)$ which proves the claim.
	\end{proof}
}
\iflong\proofExpectedBatchLengthPrescribed\fi

\inlong{\goodbreak}

\begin{lemma}[restate=restateBatchLengthProb,label=lem:batch-length-prob]
	Let $\ell \sim \colDistr{n}{r}$ and $r = \Omega(\sqrt{n})$. Then $\ell = \Theta(n/r)$ with probability $1 - o(1)$.
\end{lemma}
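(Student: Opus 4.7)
The plan is to establish two-sided concentration of $\ell$ around its mean $\Theta(n/r)$ (guaranteed by \cref{lem:expected-batch-length-prescribed}). Both tails will be attacked directly through the closed form
\[
\Prob{\ell \ge k} \;=\; \prod_{j=0}^{k-1}\left(1 - \frac{j+r}{n}\right)
\]
implied by \cref{lem:collision-distance}. Throughout I will use the regime assumption $r = \Omega(\sqrt n)$, i.e. $n/r^2 = O(1)$, which ensures that the quadratic $k^2/n$ correction terms stay of the same order as the linear $kr/n$ terms when $k = \Theta(n/r)$.

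First I would handle the \emph{upper tail}. Applying $1-x \le e^{-x}$ term-by-term gives
\[
\Prob{\ell \ge k} \;\le\; \exp\!\left(-\sum_{j=0}^{k-1}\frac{j+r}{n}\right)
\;=\; \exp\!\left(-\frac{k r}{n} - \frac{k(k-1)}{2n}\right).
\]
Substituting $k = \lambda \, n/r$ yields a bound of $\exp(-\lambda)$ (the quadratic correction is non-negative, so it only helps). Choosing $\lambda$ to be any slowly growing function of $n$ (e.g.\ $\lambda = \log\log n$) makes the tail $o(1)$; any constant $\lambda$ makes it an arbitrarily small constant, which is the sense in which $\ell = O(n/r)$ holds with probability $1-o(1)$.

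Next I would attack the \emph{lower tail}, i.e.\ bound $\Prob{\ell \le k} = 1 - \Prob{\ell \ge k+1}$ from above. Since each factor in the product is non-negative and at least $1 - (r+k)/n$, I get $\Prob{\ell \ge k+1} \ge \bigl(1 - (r+k)/n\bigr)^{k+1}$, and Bernoulli's inequality then yields
\[
\Prob{\ell \le k} \;\le\; 1 - \left(1 - \frac{r+k}{n}\right)^{k+1} \;\le\; \frac{(k+1)(r+k)}{n}.
\]
Plugging in $k = c\, n/r$ gives a bound of $O\!\bigl(c + c^2 \cdot n/r^2\bigr) = O(c)$, where the last step uses $r = \Omega(\sqrt n)$. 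Picking $c$ to be any slowly vanishing function of $n$ (e.g.\ $c = 1/\log\log n$) makes this $o(1)$.

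Combining the two tails with a union bound shows that $\ell$ lies in an interval of the form $[c\, n/r,\; \lambda\, n/r]$ with probability $1-o(1)$; since $c$ can be taken to vanish and $\lambda$ to diverge arbitrarily slowly, this gives $\ell = \Theta(n/r)$ in the claimed probabilistic sense. The main obstacle is the lower tail: the upper tail is a routine exponential estimate, whereas the lower tail requires recognizing that $r = \Omega(\sqrt n)$ is precisely the threshold at which $k(r+k)/n$ becomes $O(k r/n)$ for $k = \Theta(n/r)$; any weaker assumption on $r$ would let the quadratic term $k^2/n$ dominate and destroy concentration.
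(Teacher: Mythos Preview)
Your proposal is correct and follows essentially the same route as the paper. Both arguments attack the two tails separately via the closed form $\Prob{\ell \ge k} = \prod_{j=0}^{k-1}(1-(j+r)/n)$: the upper tail via $1-x \le e^{-x}$, and the lower tail via a Weierstrass/Bernoulli-type inequality $\prod(1-x_j) \ge 1-\sum x_j$ (the paper applies Weierstrass directly, you first replace every factor by the smallest and then apply Bernoulli, which amounts to the same bound up to a factor of~$2$ in the quadratic term). Your explicit discussion of why $r=\Omega(\sqrt n)$ is exactly the threshold for the $k^2/n$ term to stay controlled is a nice addition that the paper leaves implicit.
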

\def\proofBatchLengthProb{
	\begin{proof}
		We prove the claim by showing that $\prob{\ell < t}$ and $\prob{\ell > t}$ are $o(1)$ for $t = o(n/r)$ and $t = \omega(n/r)$, respectively.
		\begin{align*}
		\prob{\ell < t} &= 1 - \prob{\ell \ge t} = 1 - \prod_{i=0}^{t - 1}\left(1 - \frac{i + r}{n}\right) \le 1 - \left(1 - \sum_{i=0}^{t - 1}\frac{i + r}{n} \right)\\
		&= \frac{2tr + t(t-1)}{2n} \le \frac{2tr + t^2}{2n}.
		\end{align*}
		Applying the Weierstrass product inequality yields the first inequality.
		Further, with $r = \Omega(\sqrt{n})$ we have $n/r = O(\sqrt{n})$ and thus $t = o(n/r)$ such that $t^2 = o(n)$ and $tr = o(n)$.
		\begin{align*}
		\prob{\ell > t} & = \prod_{i=0}^{t}\left(1 - \frac{i+r}{n}\right) \le \left(1 - \frac{r}{n}\right)^t \le e^{-rt/n}.
		\end{align*}
		If $t = \omega(n/r)$ then $rt/n = \omega(1)$ and the claim follows.
	\end{proof}
}
\iflong\proofBatchLengthProb\fi

Intuitively, \cref{lem:expected-batch-length-prescribed} shows that for sufficiently many prescribed agents~$r$, the probability of drawing a colliding agent remains approximately $r/n$ throughout the run.
Similar to a geometric distribution, this results in a concentrated expected length of $\Theta(n/r)$.
We now estimate the number of agents sampled after $\rho$ runs.

\begin{lemma}[restate=restateMultiBatchNumberInteractions,label=lem:multi-batch-number-of-interactions]
	Let $L_k = \sum_{i=1}^{k}\ell_i$ be the number of agents drawn in an epoch with $k$ runs.
	Then, for $\q = \omega(\sqrt{\log n})$, $\q = o(\sqrt{n\log n})$ and $\rho = \Oh(\q^2/\log n)$ we have $\Ex{L_\rho} = \Theta(\sqrt{\rho n})$.
\end{lemma}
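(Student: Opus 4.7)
The plan is to control $\Ex{L_\rho}$ via its second moment from above and via Jensen's inequality applied to the convex function $x \mapsto 1/x$ from below. Throughout, I condition on the filtration generated by $L_1,L_2,\dots$ and use that, conditionally on $L_{k-1}$, the next length $\ell_k$ is distributed as $\colDistr{n}{L_{k-1}}$.

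For the upper bound, I would prove $\Ex{L_k^2} = \Oh(kn)$ by induction on $k$. Writing $L_k = L_{k-1} + \ell_k$, the conditional expectation expands as $\Ex{L_k^2 \mid L_{k-1}} = L_{k-1}^2 + 2 L_{k-1}\, \Ex{\ell_k \mid L_{k-1}} + \Ex{\ell_k^2 \mid L_{k-1}}$. The two lemmas \cref{lem:expected-batch-length} and \cref{lem:expected-batch-length-prescribed} jointly imply $\Ex{\ell_k \mid L_{k-1}} = \Oh(n/\max(L_{k-1},\sqrt n))$, and hence $L_{k-1}\cdot \Ex{\ell_k \mid L_{k-1}} = \Oh(n)$ in either regime. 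A tail sum based on the pointwise estimate $\prob{\ell_k \ge t \mid L_{k-1}} \le e^{-L_{k-1} t / n}$ used in the proof of \cref{lem:batch-length-prob} analogously gives $\Ex{\ell_k^2 \mid L_{k-1}} = \Oh(n)$. Telescoping yields $\Ex{L_\rho^2} = \Oh(\rho n)$, and Jensen's inequality delivers $\Ex{L_\rho} \le \sqrt{\Ex{L_\rho^2}} = \Oh(\sqrt{\rho n})$.

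For the lower bound, I would extract the matching one-sided estimate $\Ex{\ell_k \mid L_{k-1}} \ge c\, n/\max(L_{k-1},\sqrt n)$ for some constant $c > 0$ (again gluing the two regimes). Applying Jensen's inequality to the convex function $1/x$ with the positive random variable $Y = \max(L_{k-1},\sqrt n)$ gives $\Ex{1/Y} \ge 1/\Ex{Y}$, and since $\Ex{Y} \le \Ex{L_{k-1}} + \sqrt n$, we obtain $\Ex{\ell_k} \ge c\,n/(\Ex{L_{k-1}} + \sqrt n)$. Plugging in the upper bound $\Ex{L_{k-1}} = \Oh(\sqrt{(k-1)n})$ just established yields $\Ex{\ell_k} = \Omega(\sqrt{n/k})$, and summing gives $\Ex{L_\rho} = \sum_{k=1}^\rho \Ex{\ell_k} = \Omega(\sqrt n\,\sum_{k=1}^\rho 1/\sqrt k) = \Omega(\sqrt{\rho n})$.

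The main technical obstacle is the careful bookkeeping across the two regimes $L_{k-1} < \sqrt n$ and $L_{k-1} \ge \sqrt n$, and in particular verifying the two-sided asymptotic $\Ex{\ell_k \mid L_{k-1}} = \Theta(n/\max(L_{k-1},\sqrt n))$ uniformly; this needs only a mild adaptation of the proofs of \cref{lem:expected-batch-length,lem:expected-batch-length-prescribed} in the transitional regime $L_{k-1} = \Theta(\sqrt n)$. The hypotheses $\q = \omega(\sqrt{\log n})$, $\q = \oh(\sqrt{n\log n})$, and $\rho = \Oh(\q^2/\log n)$ are only needed to guarantee $L_\rho = \oh(n)$ throughout the epoch so that the preceding lemmas' assumptions remain valid for every run.
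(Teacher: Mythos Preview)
Your argument is correct and self-contained, but it differs substantially from the paper's proof. The paper does not carry out any recursion; instead it observes that $L_\rho$ coincides with $B(\rho,n)$, the number of draws needed in the birthday problem until $\rho$ coincidences occur, and then simply invokes Corollary~12 of \textcite{DBLP:journals/rsa/ArratiaGK16}, which states that $\Ex{B(k_n,n)} \sim \sqrt{2 n k_n}$ whenever $k_n \to \infty$ and $k_n/n \to 0$. The hypotheses on $\q$ are used only to place $\rho$ in this regime. Your route, by contrast, builds the estimate from the paper's own \cref{lem:expected-batch-length,lem:expected-batch-length-prescribed} via an inductive second-moment bound and two applications of Jensen. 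This is longer but has two advantages: it is elementary (no external citation), and it does not require $\rho \to \infty$, so it covers the full range $1 \le \rho = \oh(n)$ rather than only $\rho = \omega(1)$. One small slip: conditionally on $L_{k-1}$, the number of already-seen \emph{distinct} agents is $L_{k-1} - (k-1)$, not $L_{k-1}$, so strictly $\ell_k \distrAs \colDistr{n}{\,L_{k-1}-(k-1)\,}$; since $k \le \rho = \oh(\sqrt{\rho n})$ this shift is asymptotically negligible and none of your bounds change.
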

\def\proofMultiBatchNumberInteractions{%
	\begin{proof}
		The variable $L_k$ equivalently corresponds to the number of marbles $B(k, n)$ that need to be drawn in the birthday problem s.t.~$k$ coincidences occur.
		The asymptotics of $\Ex{B(k, n)}$ have first been studied by \textcite{DBLP:conf/sacrypt/KuhnS01} for the cases that $k = o(n^{1/4})$.
		Their results have since been improved by \textcite{DBLP:journals/rsa/ArratiaGK16} where the asymptotic bounds on the moments of $B(k, n)$ have been calculated for more general conditions on $k$.
		By \cite[Corollary 12]{DBLP:journals/rsa/ArratiaGK16} for $k$ a function of $n$, \ie $k = k_n$ where $k_n \to \infty$ and $k_n/n \to 0$ it holds that
		\[ \Ex{B(k_n, n)} \sim \sqrt{2nk_n} \quad \text{as } n \to \infty. \]
		By assumption the conditions are met since $\rho = \omega(1)$ and $\rho = o(n)$, thus $\Ex{L_\rho} = \Theta(\q\sqrt{n/\log n}) = \Theta(\sqrt{\rho n})$.
	\end{proof}
}
\iflong\proofMultiBatchNumberInteractions\fi

\paragraph{Complexity}
In order to analyze \algMulti's runtime, we first establish the time required per epoch, and then bound the total expected runtime and memory requirements.

\begin{lemma}[restate=restateAlgMultiRunningTime,label=lem:alg-multiple-running-time]
	\algMulti takes time $\Oh(\rho \log n+ \q^2)$ for an epoch of $\rho$ runs.
\end{lemma}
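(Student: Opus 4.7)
The plan is to decompose the per-epoch work into two contributions: the work accrued across the $\rho$ collision-planting runs, and the work of the single concluding batch-processing step. For each run $R_i$, the algorithm first samples its length $\ell_i \sim \colDistr{n}{r_i}$ from the appropriate generalized-birthday distribution, which by the inverse-sampling argument of \cref{sec:sampling-length-collision-free-run} costs $\Oh(\log n)$ time. It then plants one collision, which requires only a constant number of without-replacement draws and weight updates on the binary search tree that encodes $C$ and $C'$; each such operation costs $\Oh(\log \q) = \Oh(\log n)$ since the model assumption $\q < \sqrt n$ yields $\log \q \le \tfrac12 \log n$.

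The only non-trivial per-run contribution comes from resolving \delayed agents: whenever a sampled agent $a$ turns out to be \delayed, the algorithm must draw another \delayed partner $b$, evaluate $\delta(s_a, s_b)$, move $b$'s new state into $C'$, and decrement $T$ by two. If the replacement draw again hits a \delayed agent, the cascade repeats. Each individual resolution is $\Oh(\log n)$ work, and the probability that any fresh uniform draw from $C$ hits a \delayed agent is at most $T/|C|$. Under the constraints on $\q$ and $\rho$ in \cref{lem:multi-batch-number-of-interactions} we have $T = o(n)$, so this probability is bounded away from $1$ throughout the epoch; the number of cascaded resolutions per planted collision is therefore stochastically dominated by a geometric variable with constant expectation. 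Summing over all $\rho$ runs yields an expected cost of $\Oh(\rho \log n)$ for the planting phase.

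For the concluding batch-processing step, sampling the $\q \times \q$ matrix $D$ amounts to $\Oh(\q^2)$ hypergeometric draws, each computable in $\Oh(1)$ time via \cite{DBLP:conf/wsc/Stadlober89}. Evaluating $\delta(q_i, q_j)$ for each of the $\q^2$ state pairs and accumulating the net change to every state count is likewise $\Oh(\q^2)$. Rather than applying up to $\q$ individual BST updates (which would introduce an extra $\log \q$ factor), we simply rebuild the binary search tree once from the final multiplicities in $\Oh(\q)$ time. Adding the three contributions gives the claimed $\Oh(\rho \log n + \q^2)$ bound.

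The main technical subtlety is the cascade argument for \delayed-agent resolution: even though $T$ may grow to $\Theta(\sqrt{\rho n})$ within an epoch, we must still ensure that the expected number of recursive resolutions triggered per planted collision is $\Oh(1)$, so that no extra $\sqrt{\rho}$ or $\q/\sqrt{\log n}$ factor sneaks into the planting-phase bound. The domination by a geometric variable with parameter tending to $0$ (as a consequence of $T = o(n)$) is the key ingredient that makes this work.
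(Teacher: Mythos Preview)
Your decomposition into per-run work plus one batch step matches the paper's proof, which is extremely terse: it asserts that planting a collision takes $\Oh(1)$ time, that sampling each run length takes $\Oh(\log n)$, and that the batch step takes $\Theta(\q^2)$.

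The main issue with your write-up is the cascade argument for \delayed-agent resolution: it rests on a misreading of the algorithm. When a sampled agent $a$ is \delayed, the algorithm draws a partner $b$ from the \delayed pool, evaluates $\delta(s_a,s_b)$, pushes $b$'s new state into $C'$, and then \emph{uses $a$ itself} (in its updated state) in the planted collision. There is no ``replacement draw'' after the resolution, and hence no geometric cascade. The sentence in the paper about possibly repeating the step refers only to the case where \emph{both} agents of the planted interaction happen to be formerly \delayed; since an interaction involves exactly two agents, this bounds the number of resolutions per planted collision by two. So planting is genuinely $\Oh(1)$ urn operations per run (or $\Oh(\log\q)\le\Oh(\log n)$ if you insist on charging the BST cost explicitly), with no probabilistic argument needed.

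This matters because your cascade bound invokes $T = o(n)$, which you pull from the parameter regime of \cref{lem:multi-batch-number-of-interactions}. That regime is \emph{not} a hypothesis of the present lemma, so importing it weakens the statement you actually prove. Once you drop the cascade and use the deterministic bound of at most two resolutions per collision, the lemma holds for arbitrary $\rho$ as stated, and your argument becomes both simpler and stronger. Everything else in your proposal (the $\Oh(\log n)$ run-length sampling, the $\Oh(\q^2)$ hypergeometric sampling, and rebuilding the tree from final multiplicities) is fine.
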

\def\proofAlgMultiRunningTime{%
	\begin{proof}
		Planting a collision is done by drawing the two interacting agents from the appropriate urns and setting them to be \updated which requires $\Oh(1)$ time.
		Sampling the length of a single collision-free run takes time $\Oh(\log n)$ (see \cref{sec:sampling-length-collision-free-run}).
		For the final batch-processing step \algMulti takes $\Theta(\q^2)$ time independently of the number of \delayed agents.
	\end{proof}
}
\iflong\proofAlgMultiRunningTime\fi

\inlong{\goodbreak}

\begin{theorem} Let $n$ be the number of agents and $\q$ the number of states.
    \algMulti simulates $N$ interactions in $\Oh(N \q / \sqrt{n / \log n})$ expected time if $\omega(\sqrt{\log n}) \leq \q \leq \oh(\sqrt{n \log n})$.
\end{theorem}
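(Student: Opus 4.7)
The plan is to combine the two supporting lemmas (the one on $\Ex{L_\rho}$ and the one on the per-epoch running time of \algMulti) and then optimize the free parameter $\rho$, the number of collision-free runs per epoch.

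First, I will fix $\rho = \Theta(\q^2 / \log n)$. The hypotheses $\q = \omega(\sqrt{\log n})$ and $\q = \oh(\sqrt{n \log n})$ translate precisely into $\rho = \omega(1)$ and $\rho = \oh(n)$, which is exactly the regime for which \cref{lem:multi-batch-number-of-interactions} gives $\Ex{L_\rho} = \Theta(\sqrt{\rho n}) = \Theta(\q \sqrt{n / \log n})$. Moreover this choice satisfies the hypothesis $\rho = \Oh(\q^2/\log n)$ of that lemma.

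Second, I will plug into \cref{lem:alg-multiple-running-time}: each epoch consumes $\Oh(\rho \log n + \q^2) = \Oh(\q^2)$ time, where the two terms coincide by our choice of $\rho$ (this is what justifies the optimization; any larger or smaller $\rho$ makes one of the two terms dominate and worsens the bound). Dividing the per-epoch work by the expected number of interactions per epoch yields an expected amortized cost per interaction of
\begin{equation*}
\frac{\Oh(\q^2)}{\Theta(\q \sqrt{n/\log n})} \;=\; \Oh\!\bigl(\q / \sqrt{n/\log n}\bigr),
\end{equation*}
so the total expected time to simulate $N$ interactions is $\Oh\!\bigl(N \q / \sqrt{n / \log n}\bigr)$, as claimed.

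The only technical subtlety is turning the per-epoch expectation into an expectation over the stopping time at which the cumulative number of simulated interactions first reaches $N$. This is a standard Wald-style argument: the epochs are i.i.d.\ (they depend only on the current configuration sizes, which sum to $n$, not on the protocol's state), so the expected number of epochs needed to reach $N$ interactions is $N/\Ex{L_\rho} + \Oh(1)$, and multiplying by the $\Oh(\q^2)$ expected cost per epoch preserves the bound. The main obstacle is really just this bookkeeping — verifying that \cref{lem:multi-batch-number-of-interactions,lem:alg-multiple-running-time} apply uniformly across all epochs (so that the expectations compose multiplicatively) and that the asymptotic regime $\rho = \Theta(\q^2/\log n)$ is genuinely admissible under both $\q = \omega(\sqrt{\log n})$ and $\q = \oh(\sqrt{n \log n})$.
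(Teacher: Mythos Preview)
Your proposal is correct and follows essentially the same route as the paper: combine \cref{lem:multi-batch-number-of-interactions} and \cref{lem:alg-multiple-running-time}, then choose $\rho = \Theta(\q^2/\log n)$ to balance the two cost terms. You are in fact more careful than the paper on two points it leaves implicit: verifying that this choice of $\rho$ meets the hypotheses of \cref{lem:multi-batch-number-of-interactions}, and spelling out the Wald-type argument that converts per-epoch expectations into a bound on total expected time.
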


\begin{proof}
	Combining \cref{lem:multi-batch-number-of-interactions, lem:alg-multiple-running-time}, we find a runtime of
	$\Oh(N(\rho \log n + \q^2) / \sqrt{\rho n})$.
	Setting $\rho = \Theta(\q^2 / \log n)$ balances the cost of sampling runs and planting collisions with the cost of batch processing, and thus does not increase the asymptotic cost per epoch.
	Higher values of $\rho$ only increase the expected time complexity.
\end{proof}

\noindent \algMulti has sub-constant work per interaction for $\q {=} \oh\left(\sqrt{\frac{n}{\log n}}\right)$ and $N {=} \Theta(\poly n)$.
\section{Heuristics and Implementation Details}
\label{sec:heuristics}
Implementations of all discussed simulators (including scripts to reproduce figures and numbers included in this paper) are freely available.
\iflong\footnote{\url{https://projects.algorithm.engineering/population-protocols}}\fi
In the following, we highlight important aspects necessary to obtain simulators that are both fast in practice and highly customizable.

\iflong
\begingroup
\providecommand\nolinenumbers{}
\nolinenumbers
\begin{lstlisting}[
	caption={Example of a simple \problem{Leader Election} protocol.},
	label=lst:leader-election,
]
struct LeaderElectionProtocol : public
  Protocols::DeterministicProtocol, Protocols::OneWayProtocol
{
  enum Roles : state_t {Follower = 0, Leader = 1};

  state_t operator() (state_t initiator, const state_t responder) const {
    if (initiator == responder) initiator = Follower;
    return initiator; // simplification as one-way protocol
  }

  state_t num_states() const {return 2;}
};
\end{lstlisting}
\begin{lstlisting}[
	caption={%
		A non-deterministic protocol in which either the initiator or the responder circularly increments its state by one. In each interaction the active one is determined by tossing a fair coin.
	},
	label=lst:non-deterministic,
]
state_t operator() (state_t initiator, state_t responder,
                    count_t num, auto& assign) const {
  count_t n = std::binomial_distribution<>{num, 0.5}(gen_);

  //    (state                       , number of agents);
  assign(initiator + 1 %
  assign(responder                   , n);

  assign(initiator                   , num - n); // responder advances
  assign(responder + 1 %
}
\end{lstlisting}
\endgroup
 \fi

All simulators are implemented in C++ and use compile-time specializations to implement specific protocols and experimental setups.
See \cref{lst:leader-election} \inshort{in \cref{apx:implementation-details}} for a minimal example of such a protocol.
It specifies only the number~$\q$ of states and the transition function~$\delta$.

In contrast to pure deterministic functions, non-deterministic transition functions (possibly with side-effects) have to be informed about every interaction carried out.\footnote{
	Observe that many protocols with non-deterministic transition functions have been derandomized, see, \eg the notion of (biased) synthetic coins in \cite{DBLP:conf/soda/AlistarhAEGR17,DBLP:conf/soda/BerenbrinkKKO18}.
	While this is supported by the simulator, we feel it is more convenient to offer the most expressive interface possible.
}
To allow for batch processing, we cannot use the natural invocation order.
Instead, we inform the protocol how often a state pair will interact within an epoch.
It is then expected to assign all participating agents to the appropriate states (see \cref{lst:non-deterministic} \inshort{in \cref{apx:implementation-details}} for an example).

\def\IST{\textsc{Ist}\xspace}
\def\cdf{\text{cdf}}
\newcommand{\colCDF}[3]{\ensuremath{F_{#1,#2}(#3)}}

\iflong
\subsection{Sampling the Length of a Collision-Free Run}
\label{sec:sampling-length-collision-free-run}
Recall that \algSimple and \algMulti repeatedly sample the length~$\ell$ of a collision-free run.
In the following we discuss how this sampling can be implemented using the inverse sampling technique (\IST)~\cite{DBLP:books/sp/Devroye86}:
let $\cdf(x)$ be the cumulative density function of a target distribution.
Then, \IST draws a uniform variate~$U$ from $[0; 1]$, solves $U = \cdf(x)$ for $x$, and returns it as the sample.
We denote the CDF of $\colDistr{n}{k}$ as $\colCDF{n}{k}{t}$.
\cref{lem:collision-distance} yields:
\begin{equation*}
	1 - \colCDF{n}{k}{t}
	  = \prob{\ell {>} t}
	  = \prod_i^{t} \frac{n {-} k {-} i}{n}
	  = \frac{1}{n^t} \frac{ (n-k)! }{ (n-k-t-1)! }
	  \stackrel{(x-1)! = \Gamma(x)}{=} n^{-t} \frac{\Gamma(n - k + 1)}{\Gamma(n-k-t)}.
\end{equation*}
Since we are not aware of an inverse that can be evaluated fast, we numerically solve $U = \colCDF{n}{k}{t}$ for $t$.
To avoid numerical instabilities, we rewrite the expression in terms of $\log\Gamma(x)$, which is available as the C standard function \texttt{lgamma}$(x)$:
\begin{equation*}
	U = 1 - n^{-t} \frac{\Gamma(n - k + 1)}{\Gamma(n-k-t)}
	\quad \Leftrightarrow\quad
	\log\left(1 - U\right) = \log\Gamma\left(n - k + 1\right) - \log\Gamma\left(n-k-t\right) - t \log n
\end{equation*}

Lacking a cheap derivative of the RHS, we rely on first-order numerical inversion methods only.
In this context, an ad-hoc combination of binary search and regula-falsi gave most consistent results.
We jump-start the search using a small look-up table containing lower and upper bounds on~$t$ for intervals of $U$ and $k$.

While the method requires $\Oh(\log n)$ evaluations of $\colCDF{n}{k}{\cdot}$, we observe less than ten calls on average for $n = 2^{50}$.
The resulting sampling algorithm has a practical runtime comparable to the sampling of hypergeometric random variates.
Since the latter is sampled much more frequently, further optimizations will yield limited results to the total runtimes of $\algSimple$ and $\algMulti$.
 
\subsection{Heuristics}

\else
\iffalse%
\paragraph{Sampling the Length of a collision-free Run}
Recall that \algSimple and \algMulti repeatedly sample the length~$\ell$ of a collision-free run.
In \cref{sec:sampling-length-collision-free-run}, we discuss how this sampling can be implemented using the inverse sampling technique~\cite{DBLP:books/sp/Devroye86}.
We find that sampling~$\ell$ takes $\Oh(\log n)$ time.
In practice, this is comparable to the time it takes to sampling a hypergeometric random variate.
\fi
\fi

The frequent sampling of hypergeometric random variates, dominates \algMulti's runtime.
In the following, we discuss three heuristics to reduce this number.

\paragraph{Renaming}\label{subsec:heuristic-renaming}
In the \emph{renaming heuristic} we exploit the observation that agents are typically not uniformly distributed over all states.
Instead there are often sparsely populated states which are seldom hit when sampling an agent.

It can be beneficial to consider these states last.
\algStepWiseLinear's linear search, for instance, stops as soon as the sampled state is found.
The same is true when sampling \algSimple's and \algMulti's interaction matrices:
for row~$i$, we draw $D_i$ agents from a multivariate hypergeometric distribution.
This is implemented by obtaining $\q{-}1$ properly parametrized hypergeometric variates;
the process terminates early once all $D_i$ agents have been sampled.

In both examples, we maximize the probability of early stopping by considering highly populated states first.
To this end, we maintain a permutation $\pi\colon [\q] \to [\q]$ that sorts states decreasingly by their sizes.
We then process states in the order indicated by~$\pi$.
If this permutation is updated once every $\Omega(\q \log \q)$ interactions, the sorting step becomes asymptotically negligible for sequential simulators.
For \algSimple and \algMulti, $\pi$ can be updated once every $\Omega(\log q)$ batches.

\paragraph{Partitioning}\label{subsec:heuristic-partitioning}
\begin{figure}
	\centering
	\scalebox{0.8} {%
\begin{tikzpicture}
	\node (matrix) {
		\bgroup
		\begin{tabular}{c | c c c c}
		  & 0 & 1 & 2 & 3  \\
		\hline
		0 & 0 & 1 & 0 & 0 \\
		1 & 1 & 1 & 2 & 1 \\
		2 & 2 & 2 & 2 & 3 \\
		3 & 0 & 3 & 3 & 3
		\end{tabular}
		\egroup
	};

 \node (singlerow) at (13.7em, 0em) {
	 \bgroup
	 \begin{tabular}{c c c c}
	 & & &   \\
	 & & & \\
	 1 & 1 & 2 & 1 \\
	 & & & \\
	 & & &
	 \end{tabular}
	 \egroup
	};

	\node[anchor=north] at (0em, 5em) {\textbf{Full matrix}};
	\node[anchor=north] at (13.7em, 5em) {\textbf{Row for $q_u = 1$}};
	\filldraw[opacity=0.2,color=red] (13.7em, 0.5em) -- (10.6em, 0.5em) -- (10.6em, -0.5em) -- (13.7em, -0.5em) -- cycle;
	\filldraw[opacity=0.2,color=red] (16.9em, 0.5em) -- (15.7em, 0.5em) -- (15.7em, -0.5em) -- (16.9em, -0.5em) -- cycle;
	\filldraw[opacity=0.2,color=blue] (15.5em, 0.5em) -- (13.9em, 0.5em) -- (13.9em, -0.5em) -- (15.5em, -0.5em) -- cycle;

	\filldraw[opacity=0.2,color=red] (0.8em, 0.5em) -- (-2.3em, 0.5em) -- (-2.3em, -0.5em) -- (0.8em, -0.5em) -- cycle;
	\filldraw[opacity=0.2,color=red] (4em, 0.5em) -- (2.8em, 0.5em) -- (2.8em, -0.5em) -- (4em, -0.5em) -- cycle;
	\filldraw[opacity=0.2,color=blue] (2.6em, 0.5em) -- (1em, 0.5em) -- (1em, -0.5em) -- (2.6em, -0.5em) -- cycle;

	\node[anchor=north] at (10em, -2em) {Group ($q'_u = 1$)};
	\node[anchor=north] at (18em, -2em) {Group ($q'_u = 2$)};
	\draw[->,semithick] (14.7em, -0.8em) -- (17em, -1.8em);
	\draw[->,semithick] (12.1em, -0.8em) -- (11em, -1.8em);
	\draw[->,semithick] (16.3em, -0.8em) -- (12em, -1.8em);
\end{tikzpicture} %
}
	\vspace{-0.5em}
	\caption{%
		Simplified transition matrix~$\Delta'$ for a clock with period $m {=} 4$.
		The initiator's phase (row) is circularly incremented only when matched with a suitable responder (column).
	}
	\label{fig:phase_clock_transition_submatrix}
\end{figure}

If $\delta$ is a deterministic function, we can model it as a matrix~$\Delta \in (Q \times Q)^{\q \times \q}$.
The matrix of a deterministic one-way protocol can be further simplified to $\Delta' \in Q^{\q \times \q}$ since the states of the responders remain unchanged.

For many meaningful protocols, the entries of $\Delta'$ are not random but exhibit some structure.
As an example, consider the simplified\footnote{%
	Formally the states of the phase-clock are pairs $(x, b)$ where $x$ represents the \emph{phase}, and $b$ marks an agent as \emph{leader}.
	For the sake of simplicity we assume that all agents are followers.
} phase-clock transition matrix illustrated in \cref{fig:phase_clock_transition_submatrix}.
Here, each row contains only two different output states.
The \emph{partitioning heuristic} uses this observation during the batch steps of \algSimple and \algMulti.
When sampling $D_i$ responders for initiators in state~$q_i$, we group together all entries in the $i$-th row of $\Delta'$ that assign the same new state to the initiating agents.
It then suffices to draw one random hypergeometric variate per group.

Note that the heuristic does not reduce the runtime complexity of the algorithm --- even if we precompute the partitioning.
This is due the fact that we still need to compute the population sizes for each group which involves $\Theta(\q)$ additions per row.
For pathological protocols, the number of hypergeometric random variates required for each row remains $\Omega(\q)$.
A simple worst-case protocol is the transition function $\delta(q_u, q_v) = (q_v, q_v)$.

\paragraph{Skipping}
Generalizing \emph{partitioning heuristic} to two-way protocols tends to be ineffective in practice:
since initiator and responder may both update their states, the transition matrix is often more fragmented.
The partitioning overhead then easily exceeds the potential savings.
\inlong{\par}
For such protocols \algMulti uses a coarser partitioning, and only detects and skips transitions that preserve the configuration (\ie $\delta(q_u, q_v) = (q_u, q_v)$ or $\delta(q_u, q_v) = (q_v, q_u)$).
\iflong
To see how this applies, let $\delta(q_u, q_v)$ and $\delta(q_u, q_w)$ be configuration preserving transitions.
Consider the sampled row sum $D_u$ for $q_u$ where we in a second step sampled the interaction counts $d_{uv}$ and $d_{uw}$ of interactions of agents in state $q_u$ with agents in states $q_v$ or $q_w$.
The batch-processing algorithm would simply take both counts $d_{uv}$ and $d_{uw}$ and update the urn $C'$ by subsequent additions.
In this concrete scenario we group $q_u$ with $q_v$ and only sample one count rather than two reducing the number of random variates.
\fi

\subsection{Dynamic Epoch Lengths}
Recall that $\algMulti$ is split into several epochs that each consist of multiple collision-free runs.
In our implementation, we add runs to an epoch until the number of interactions exceeds a specified threshold $T$.
The value of $T$ has to be chosen as a trade-off between the cost of adding another run (\ie sampling the run length and planting a collision) versus the diminishing return it yields (as later runs become shorter in expectation).
This trade-off depends on the protocol and its configuration.
For instance, the batch processing cost of a convergent protocol may become smaller compared to the initial costs (\eg when most agents are in only a small fraction of the states).

As the trade-off is dynamic, we maximize the throughput using a control loop that dynamically optimizes the length of an epoch.
Given the currently best value known for $T$, it increases (and later decreases) $T$ to $1.1T$ and $0.9T$, respectively.
For each of the three values, we measure the throughput, chose the $T$ which maximizes it and repeat.
Since the throughput response curve is single-peaked, the process will find a nearly optimal $T$.
\section{Experimental Evaluation}
\label{sec:simulations}
In the following, we empirically evaluate the various simulation algorithms.
The code is compiled using \texttt{g++-8.3} with flags \texttt{-O3 -march=native} and executed on the following system:
 $2\times$ Intel Xeon Gold 6148 CPU @ \SI{2.4}{\GHz} (40~cores/80 hardware threads in total),
 \SI{192}{\gibi\byte} DDR4 RAM @ \SI{2666}{\MHz}.
Each data point is the median of at least five measurements (using different random seeds); error bars indicate their the standard deviation.

\algStepWiseBST's search tree is implemented as an array with breath-first-indexing (\ie the weight of node~$i \ge 1$ is stored at $A[i]$;
its left child is at index $2i$, its right child at $2i {+} 1$).
The implementation uses predicate logic to reduce pipeline stalls due to conditional branching.
\algStepWiseArray uses an array with \SI{32}{\bit} words to store states.
We additionally consider \algStepWiseArrayPre which prefetches states for eight\footnote{%
    This is the optimal value measured for this CPU type and slightly varies between machines.
} interactions ahead of time as a latency hiding technique.

\begin{figure}
    \begin{center}
        \includegraphics[width=0.9\textwidth]{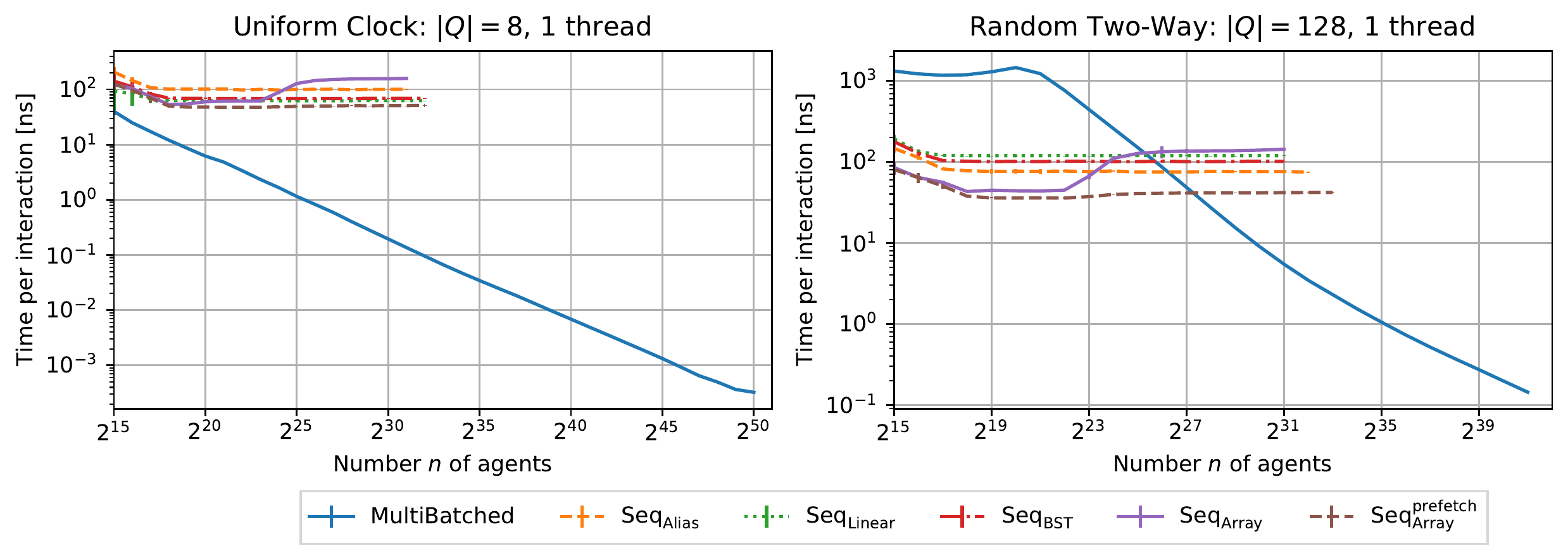}
    \end{center}

    \vspace{-1em}

    \caption{
        Processing time per interaction as function of the number of agents $n$.
        Each series ends with the largest value $n$ for which the median of the total processing time is below \SI{400}{\second}.
    }
    \label{fig:benchmark_selection_batch}
\end{figure}

The runtime of most simulators has non-trivial dependencies on the input parameters, protocol, and state distribution.
Hence, we simulate a small number $N = n$ of interactions to prevent measuring artifacts caused by significant changes in the state distributions.
\algMulti typically simulates slightly more interactions due to the batching granularity.
Since the runtime of all simulators is linear in~$N$, we always report the time \emph{per interaction} to ease extrapolation.

\noindent Three different protocols are used to highlight certain aspects of the algorithms.
\begin{itemize}
    \item \emph{Uniform Clock} and \emph{Running Clock} implement the same deterministic one-way protocol phase-clock protocol inspired by~\cite{DBLP:conf/soda/GasieniecS18}.
    In the \emph{running} variant, all agents start in the first phase, and $\sqrt{n}$ of them are marked.
    Due to the choice of parameters, we expect only one out of $\Theta(\sqrt n)$ interactions to change states.
    Thus, even at the end of the benchmark, the population is still highly concentrated in the lowest phase.
    The \emph{uniform} variant, in contrast, evenly distributes marked and unmarked agents over all phases.
    This results in a constant update probability per interaction.

    \item The \emph{Random Two-Way} protocol uses a deterministic transition function $\delta(q_i, q_j) = d_{ij}$ where each $d_{ij}$ is initially drawn independently and uniformly at random from $Q$.
    Initially agents are evenly distributed over all states.
\end{itemize}

\paragraph{Number of Agents}
We begin our experimental study by investigating the dependencies on the problem size~$n$.
To this end, we search for the largest number~$n$ of agents that a simulator can simulate within a fixed time budget of \SI{400}{\second}.
\Cref{fig:benchmark_selection_batch} reports such measurements for two different settings (see \cref{fig:benchmark_grid_with_batch} in \cref{apx:additional-benchmarks} for the full set).

For the \emph{Uniform Clock} protocol with $\q = 8$ states, the fastest \algStepWise variant reaches $n=2^{32}$ within the time budget.
In the same time, \algMulti simulates $N=n=2^{50}$ interactions.
For \emph{Random Two-Way}, the ratio between the achievable population sizes is smaller but still exceeds three orders of magnitude.
We attribute the different ratios mainly to the batching step.
\algMulti requires $\Theta(\q^2)$ hypergeometric variates per batching step for the latter protocol, since neither the partitioning nor the skipping heuristics (see \cref{sec:heuristics}) are effective on a featureless transition matrix with uniformly random states.
For the \emph{Uniform Clock} protocol, the partitioning heuristic reduces the number of hypergeometric random variates to less than $2\q$ per batch (\cf \cref{fig:benchmark_grid_with_batch} in \cref{apx:additional-benchmarks}).

\paragraph{Number of States}
\begin{figure}
    \includegraphics[width=\textwidth]{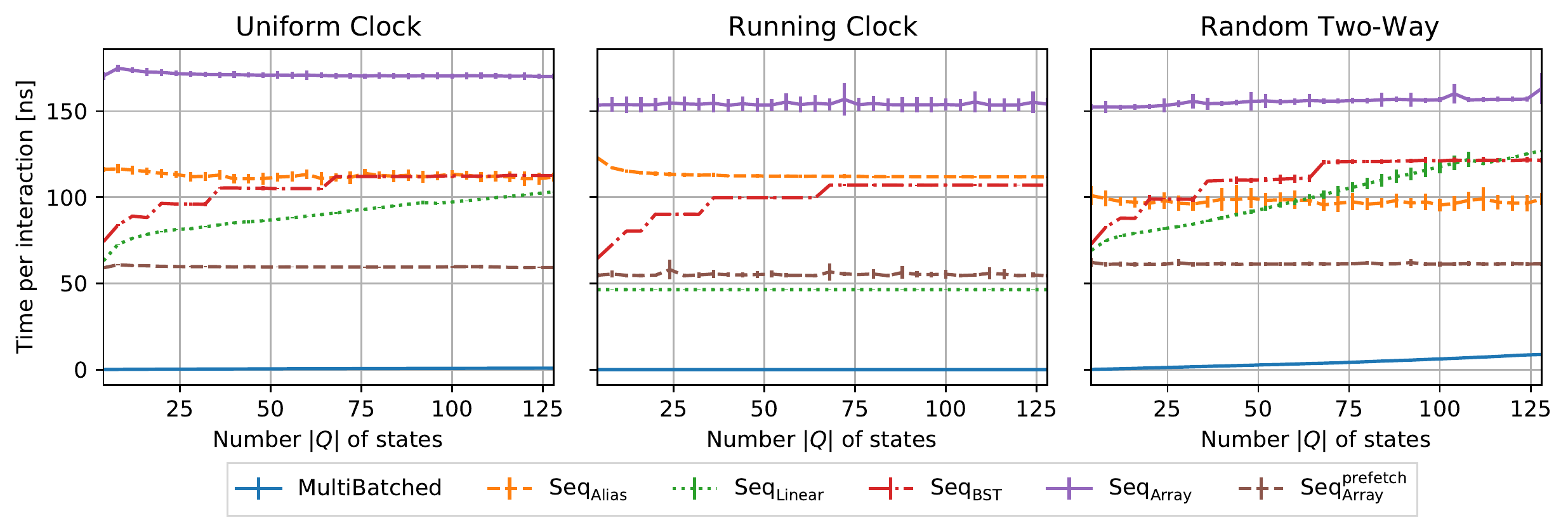}
    \caption{Processing time per interaction as function of the number of states $\q$ with $n=2^{30}$.}
    \label{fig:benchmark_states}
\end{figure}

As summarized in \cref{tab:simulators}, the number $\q$ of states crucially affects the algorithms' runtimes.
To quantify the practical impact, we carry out scaling experiments for $4 \le \q \le 128$ while fixing all remaining parameters.
\Cref{fig:benchmark_states} visualizes the results.

\algMulti performs best in all cases supporting our previous analysis.
It shows almost no scaling behavior for both clock protocols.
For \emph{Random Two-Way}, the algorithm is almost one order of magnitude faster than its competitors despite a slow-down of $40$ between the smallest and largest state sizes.

\algStepWiseArrayPre is the second fastest solution in almost all settings.
In the \emph{Running Clock} campaign, it is however outperformed by \algStepWiseLinear.
This can be explained by the fact that initially almost all agents are in state~$0$ which results in a constant time look-up despite the usage of a linear search.
This behavior motivates the renaming heuristic.

We observe no systematic dependency on $\q$ for \algStepWiseAlias rendering it a good choice for very large state spaces.
While it is up to a factor of $2.0$ slower compared to \algStepWiseArrayPre, the algorithm might be preferable in a parallel setting.

\paragraph{Memory Footprint and Parallelism}
Due to the stochastic nature of the protocols, we expect that in almost all applications several runs of the same protocol are required to derive statistically significant results.
On modern machines with many processor cores, one should be able to maximize the throughput by executing multiple independent simulations in parallel.
As visualized in \cref{fig:benchmark_selection_threading}, most simulators scale well with the number of threads and typically achieve a self-speedup of $40$ to $50$ times using 40 cores (plus HyperThreading) at $n=2^{30}$.
A notable exception is \algStepWiseArrayPre, which reaches only a speedup of $30$ as it saturates the memory controllers of both CPU sockets.

Another aspect of parallel execution is the memory footprint.
Since \algStepWiseArray requires constant memory per agent, our implementations of \algStepWiseArray and \algStepWiseArrayPre allocate in excess of \SI{320}{\giga\byte} main memory to execute 80~processes in parallel.
Although, this number can be reduced by constant factors using a more efficient representation of states, it has to be contrasted to the competing algorithms with a state space of only a few kilobytes\footnote{
    We report no exact numbers as the system's process overheads exceed the simulators' internal states.
} for the same campaign.

\begin{figure}
    \includegraphics[width=\textwidth]{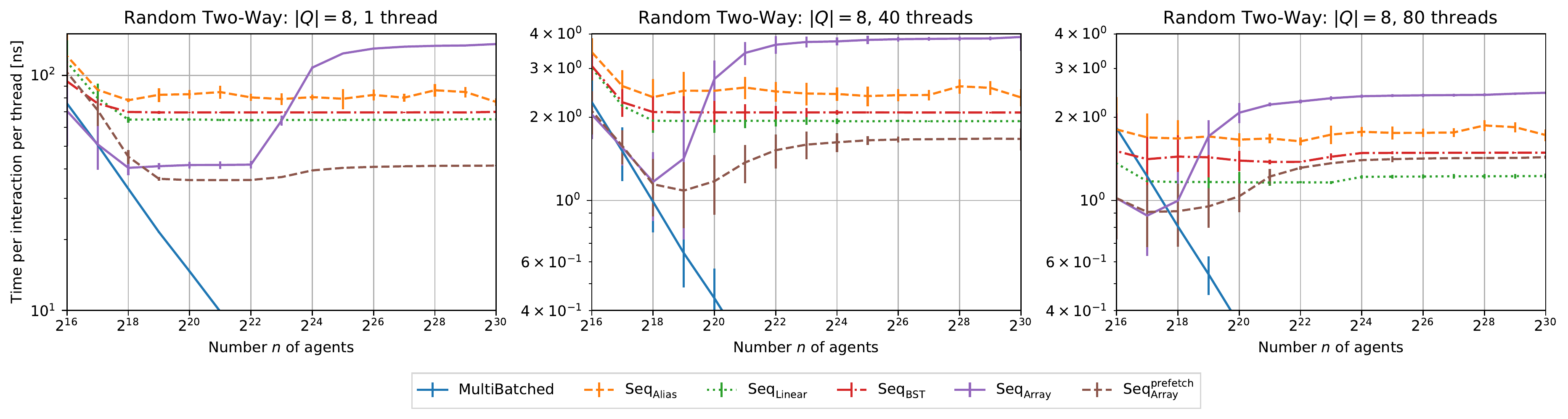}
    \caption{
        Effect of process parallelism on a machine with 40 CPU cores (plus HyperThreading).
    }
    \label{fig:benchmark_selection_threading}
\end{figure}

\section{Conclusions and Open Problems}
\label{sec:open-problems}

We considered the simulation of large population protocols to allow the experimental investigation of slowly scaling observables in such systems.
Two algorithm classes are discussed.

\emph{Sequential simulators} carry out each interaction one after the other.
We analyze the variants \algStepWiseArray, \algStepWiseLinear, \algStepWiseBST, and \algStepWiseAlias which differ in the data structures maintaining the agents' states, and demonstrate substantial differences in their practical performances.
As a by-product, we describe the \dat which might be independently applicable.

\emph{Batched simulators} coalesce interactions to achieve asymptotical speed-ups for protocols with a limited number of states.
Our implementation then simulates more than $2^{50}$ interactions in \SI{400}{\second} which is several orders of magnitudes larger than the fastest sequential simulator.

\paragraph{Possible Extensions}
In some variants of the population model it is assumed that interactions are limited to some communication network.
Since we store the configurations in our batched simulators as a multiset, it is not clear how to directly adapt our approach.
We believe this might be an interesting extension of our work.

Further variants are concerned with the way how agents interact.
It is straightforward to adapt our simulator software to a setting where, \eg a random matching of agents interacts in each time step.
Furthermore, our approach could be generalized to a setting where more than two agents interact.
In this case we are in need of good heuristics for partitioning the transition-tensor, since the work for updating a batch grows exponentially in the number of interacting agents.
In general, sampling the interaction counters is a frequent and costly task.
Therefore, any improved heuristic to sample from the $\q \times \q$ matrix using only $\oh(\q^2)$ variates would yield a measureable benefit for the total runtime of a simulation.

   \inshort{\clearpage}

   \printbibliography
   \clearpage

   \appendix
   \section*{Appendix}

\ifshort
\section{\dat{s}}
\label{apx:dynamic-alias-table}
In this section we give the full proof of the correctness of our \dat data structure.
Formally, we show \cref{thm:dynamic-alias-table}, which is restated for convenience as follows.
\restateThmDynamicAliasTable*

\def\walker{DBLP:journals/toms/Walker77}
\def\vose{DBLP:journals/tse/Vose91}

\medskip

\noindent Before we prove the theorem, we briefly recall the main ideas used for our data structure.

\medskip

Our goal is to model an urn which initially contains $n$ marbles, each of which has one of $k$ possible colors.
We assume that the colors are identified by numbers in $\set{1, \dots, k}$.
The urn defines a probability distribution $D$ for the color of a marble drawn uniformly at random: let $p_i$ be the probability that we sample a marble of color $i$.
We can use the alias method \cite{\walker} to sample from $D$ in constant time as follows.

\paragraph{Original Alias Method}
The alias method \cite{\walker} uses a table with two columns and $k$ rows, one row for each element (color) in the distribution $D$.
Each row $i$ has two entries corresponding to two elements.
Each element has a weight in $[0,1]$, and the two weights sum up to $1$ in each row.
The first element of row $i$ is always element $i$.
It has weight $F[i]$.
The second element of row $i$ is stored in $A[i]$.
It has a weight of $1 - F[i]$.
This means that the original alias method uses only the two arrays, $F$ and $A$, to store the distribution $p_1,\dots,p_k$.

To sample from $D$ in the original alias method, we first sample a row $i$ uniformly at random from $\set{1, \dots, k}$.
Then we draw a random real $X \in [0, 1)$.
If $X < F[i]$, we return the left element, $i$.
Otherwise, we return the right element, $A[i]$.
In the following, we modify the alias method and call the resulting data structure \dat.

\paragraph{\dat{s}}
Recall that we assume that our distribution corresponds to an urn storing $n$ marbles of $k$ different colors.
First, we explicitly add a second weight array $S[i]$ which stores the weight of the second column.
Now instead of storing just one real value $F[i]$ for each row $i$, we store the exact numbers of marbles as integers for the first and the second entry of row $i$ in $F[i]$ and $S[i]$, respectively.
As before, the first entry of row $i$ corresponds to color $i$ and the second entry of row $i$ corresponds to color $A[i]$.
The rows are constructed in such a way that the total weight of each row no longer adds up to the real value $1$, but to the integer value $\floor{n/k}$ or $\ceil{n/k}$ such that all rows in total add up to $n$.

\begin{observation} \label{obs:rebuild-dynamic-alias-table}
Let $U$ be a \dat encoding an urn with $n$ marbles and $k$ colors.
The data structure $U$ can be constructed in $\Oh(k)$ time.
\end{observation}

\begin{proof}
The algorithm by \textcite{\vose} can generate a (original) alias table representation of such a discrete probability distribution $D$ in $\Oh(k)$ time.
It is straightforward to define a mapping between the weights of the original alias method as computed in \cite{\vose} and the two integer values used in our \dat.
It follows that the \dat (using integer weights) can be constructed in $\Oh(k)$ time.
\end{proof}

\paragraph{Updating and Sampling from \dat{s}}
As already observed in \cref{sec:sequential-simulation}, our modified data structure now allows us to sample elements with and without replacement.
As before, let $R[i] = F[i] + S[i]$ denote the weight of row~$i$ and define $R_\text{min}$ and $R_\text{max}$ as smallest and largest row weights, respectively.
Observe that in general $R_\text{min} \neq R_\text{max}$.

In order to sample from $U$, we first select row~$i$ uniformly at random from $\set{1,\dots,k}$.
Then we draw a uniform variate~$X$ from $\set{0, \dots, R_\text{max}-1}$.
There are three possible events:
If $X < F[i]$, we emit the first element $i$.
If $F[i] \leq X < R[i]$, we emit the second element $A[i]$.
Otherwise, we reject the trial and restart the sampling process.

If we sample from $U$ without replacement, we decrement the weight of the element we just sampled.
This is always possible, since only elements with strictly positive weights can be sampled in the first place.
If we add a new element with color $i$ to $U$, we increment the weight of the first element of row $i$.

In order to guarantee expected constant sampling time, we ensure that the fraction
between $R_\text{min}$ and $R_\text{max}$ does not exceed a certain value.
Let $0 < \alpha < 1$ and $\beta > 1$ be two parameters chosen such that $\beta / \alpha = \Oh(1)$.
After each update to $U$ we require \begin{equation} \label{eq:dynamic-alias-table-condition} \alpha \floor{n/k} \le R_\text{min} \le R_{max} \le \beta \ceil{n/k} . \end{equation}
Otherwise, we rebuild the data structure in $\Oh(k)$ time.

\medskip

\noindent We are now ready to show \cref{thm:dynamic-alias-table}.

\begin{proof}[Proof of \cref{thm:dynamic-alias-table}]
We start with the memory complexity.
The \dat $U$ stores the values of $k$, $n$, and $R_\text{max}$ as well as three arrays.
Array $F[1\dots k]$ stores the weight of the first column, array $S[1\dots k]$ stores the weight of the second column, and array $A[1\dots k]$ stores the alias, \ie the element of the second column.
All entries are integers from $\set{0, \ldots, n}$ (recall that we assume $k \leq \sqrt{n}$).
Thus, the \dat requires $\Theta(k \log n)$ bits of memory.

\medskip

Let us now consider the sampling procedure.
First, we consider the rejection probability.
Recall that we first sample a row $i$ and then draw a uniform variate $X$ from $\set{0, \dots, R_\text{max}-1}$.
As before, we denote the total weight of row $i$ as $R[i]$ with $R[i] = F[i] + S[i]$.
A sampling trial in row $i$ is rejected if $X \geq R[i]$, \ie with probability $R[i] / R_\text{max}$.
Therefore, the probability to reject a sample from any row is at most $R_\text{min} / R_\text{max}$.
From the conditions in \cref{eq:dynamic-alias-table-condition} we get that the rejection probability is at most $\alpha / \beta = \Oh(1)$ and, conversely, we have at least a constant success probability of $(\beta - \alpha) / \beta$.
The number of trials until we emit an element is therefore geometrically distributed and has an expected value of at most $\beta / (\beta - \alpha) = \Oh(1)$.

\def\R{\mathcal{R}}
\def\S{\mathcal{S}}
\def\OMEGA{\mathbf{\Omega}}
It remains to show that we emit an element of color $i$ with probability $p_i = \hat C_i / n$, where $\hat C_i$ is the number of marbles of color $i$ in the \dat $U$.
We consider a single sampling trial.
Observe that in each trial we are given a uniform probability space
$\OMEGA = \set{(i, x)\ :\ 1 \leq i \leq k \text{ and } 0 \leq x < R_\text{max}}$.
From this probability space we draw the row $i$ and the value $X$ uniformly at random.
Fix a color $c$ and let $\S_c$ be the set of all events $(i,x)$ which lead to emission of an element of color $c$ for this probability space $\OMEGA$.
An event $(i,x)$ is in $\S_c$ if and only if (i)
$i = c$ and $x < F[i]$ or (ii) $A[i] = c$ and $F[i] \leq x < F[i] + S[i]$.

The \dat $U$ is constructed such that the total weight for each color $c$
always equals $\hat C_c$.
Therefore, counting all elementary events gives us $|\S_c| = \hat C_c$.
Observe that $\OMEGA$ is a uniform probability space since the row $i$ and the value $X$ are drawn uniformly. It has size $|\OMEGA| = k R_\text{max}$.
Hence, all events in $\S_c$ have equal probability $1/(k R_\text{max})$, and we get $\Prob{\S_c} = |\S_c| / (k R_\text{max}) = \hat C_c / (k R_\text{max})$.

Let $\mathcal{\R}$ be the event that a trial is rejected.
Analogously to before, we enumerate over all elementary events and obtain $|\R| =
\sum_{i} (R_\text{max} - R[i]) = k R_\text{max} - n$.
For the complementary event $\overline\R$ we get $|\overline\R| = n$, which matches the intuition that the urn contains $n$ marbles.
Hence, we have $\Prob{\overline\R} = n / (k R_\text{max})$.
Observe that $\S_c$ and $\R$ are mutually exclusive and hence $\S_c \cap \overline\R = \S_c \setminus \R = \S_c$.

Rejected trials emit no element, but are repeated.
Hence, we condition on $\overline\R$ and obtain 
\begin{align*}
p_c &= \Prob{\S_c | \overline\R}
= \frac{\Prob{\S_c \cap \overline\R} }{ \Prob{\overline\R} }
= \frac{\Prob{S_c}}{\Prob{\overline\R}}
= \frac{\hat C_c}{k R_\text{max}} \cdot \frac{k R_\text{max}}{n}
= \frac{\hat C_c}{n} .
\end{align*}
This means that a color $c$ is indeed emitted with the correct probability $p_c = \hat C_c / n$.

\medskip

Finally, we consider the amortized costs of rebuilding the \dat $U$ ever so often.
Observe that $U$ has to be rebuilt whenever the condition in \cref{eq:dynamic-alias-table-condition} is violated.
This can happen in two possible ways.
\begin{itemize}
\item Case 1: $R_\text{min} < \alpha \floor{n/k}$.

\noindent
In this case there must exist a row $i$ for which $R[i] < \alpha \floor{n/k}$.
Observe that by assumption of the theorem we have $n \geq k^2$, and after rebuilding $U$ we have $R_\text{min} = \floor{n/k}$.
In order to have $R[i] < \alpha \floor{n/k}$, at least $\floor{n/k}- \alpha \floor{n/k} \geq k(1-\alpha)$ elements must have been deleted from row $i$.
As $0 < \alpha < 1$, this happens only after at least $k(1-\alpha) = \Omega(k)$ sampling operations.
Now according to \cref{obs:rebuild-dynamic-alias-table}, rebuilding takes time $\Oh(k)$.
Together this implies that rebuilding $U$ takes amortized constant time per update of $U$.

\item Case 2: $R_\text{max} > \beta \ceil{n/k}$.

\noindent The second case follows analogously to the first case.
If $R_{\max} > \beta \ceil{n/k}$, at least $\beta \ceil{n/k} - \ceil{n/k} \geq k(\beta-1)$ elements must have been added.
This takes at least $k(\beta - 1) = \Omega(k)$ insertions, and hence rebuilding $U$ takes amortized constant time per insertion.
\end{itemize}

\noindent Removal and insertion operations can be arbitrarily mixed and interact only beneficially towards the amortization arguments.
This concludes the proof of the theorem.
\end{proof}

 \clearpage

{
\section{Batch Processing}
\label{apx:batch-processing}
In this appendix, we give the pseudocode for our simulator \algSimple in \cref{alg:seq-simple} and an illustration of collision-free runs in \cref{fig:batched-interactions}.

\raggedbottom
\begin{figure}[H]
    \centering
    \scalebox{1}{%
\begin{tikzpicture}
   \def\is{*4.2em}

   \node[anchor=west] at (0em, 5em) {\textbf{The original interaction sequence (\cf \cref{{sec:sequential-simulation}}):}};
   \foreach \a/\b/\k/\l [count=\i] in {3/2/1/2, 1/2/3/4, 1/1/5/6, 2/1/7/8, 3/3/9/10, 1/2/11/12, d/d/d/d, 1/1/{2\ell-1}/{2\ell}, 2/?/{2\ell+1}/6} {
      \ifthenelse{\equal{\a}{d}}{
         \node at (\i \is, 0) {$\dots$};
      }{
         \node[draw, circle, inner sep=0, minimum height=1.5em, state\a] (rqa\i) at (\i \is - 0.22 \is, 0) {$q_\a$};
         \node[draw, circle, inner sep=0, minimum height=1.5em, state\b] (rqb\i) at (\i \is + 0.22 \is, 0) {$q_\b$};
         \node[anchor=south] at (rqa\i.north) {$s_{i_{\k}}$};
         \node[anchor=south, onif={<\equal{\l}{6}> conflict}] at (rqb\i.north) {$s_{i_{\l}}$};
      }
   }

   \node[draw, minimum width=0.5\is, minimum height=0.5\is, anchor=north] (trans) at (3\is, -0.5\is) {$\delta$};
   \path[draw, semithick, green, arrows={-\arrowHead}] (rqa3) to (trans);
   \path[draw, semithick, green, arrows={-\arrowHead}] (rqb3) to (trans);

   \path[draw, red, semithick, arrows={-\arrowHead}] (trans.east) -| node[pos=0.25, black, font=\small, align=center] {
      the updated state of the agent drawn twice is\\
      known only after the $\delta$ was evaluated} (rqb9);
   \path[draw, red, semithick, arrows={-\arrowHead}, bend right] (trans.east) to (rqb3);
   \path[draw, red, semithick, arrows={-\arrowHead}, bend left] (trans.west) to (rqa3);

   \path[draw] ($(rqa1.west) + (-0.3em, 0)$) to ++(0, 2.5em) -| node [pos=0.25, above, yshift=-0.3em] {
      the $\ell$ independent interactions can be rearranged arbitrarily} ($(rqb8.east) + (0.3em, 0)$);

\node[anchor=west] at (0em, -6em) {\textbf{After sorting state pairs:}};
   \node[anchor=north, minimum width=2\is, minimum height=16em, fill=black!5, align=center] at (8.5\is, -6em) {\ \\[0.5em] \small special treatment\\[-0.4em] \small for collision};

   \foreach \a/\b/\k/\l [count=\i] in {1/1/20/21, 1/1/{2\ell-1}/{2\ell}, 1/2/3/4, 1/2/11/12, 2/1/7/8, d/d, 3/3/9/10, 1/1/5/6, 2/?/2\ell+1/6} {
      \ifthenelse{\equal{\a}{d}}{
         \node at (\i \is, -9em) {$\dots$};
      }{
         \node[draw, circle, inner sep=0, minimum height=1.5em, state\a] (rrqa\i) at (\i \is - 0.22 \is, -9em) {$q_\a$};
         \node[draw, circle, inner sep=0, minimum height=1.5em, state\b] (rrqb\i) at (\i \is + 0.22 \is, -9em) {$q_\b$};
         \node[anchor=south] at (rrqa\i.north) {$s_{i_{\k}}$};
         \node[anchor=south, onif={<\equal{\l}{6}> conflict}] at (rrqb\i.north) {$s_{i_{\l}}$};
      }
   }

   \node[draw, minimum width=0.5\is, minimum height=0.5\is, anchor=north] (trans1) at (8\is, -0.5\is - 9em) {$\delta$};
   \path[draw, red,   semithick, arrows={-\arrowHead}] (trans1.east) -| (rrqb9);
   \path[draw, green, semithick, arrows={-\arrowHead}] (rrqa8) to (trans1);
   \path[draw, green, semithick, arrows={-\arrowHead}] (rrqb8) to (trans1);
   \path[draw, red,   semithick, arrows={-\arrowHead}, bend right] (trans1.east) to (rrqb8);
   \path[draw, red,   semithick, arrows={-\arrowHead}, bend left]  (trans1.west) to (rrqa8);

   \node[anchor=west] at (0em, -15em) {\textbf{After merging interactions with identical state pairs:}};

   \foreach \a/\b [count=\i] in {-/1, -/1, -/2, -/2, -/3, -/3, -/d, 1/1, 2/?} {
      \ifthenelse{\equal{\a}{d}}{
         \node at (\i \is, -17em) {$\dots$};
      }{\ifthenelse{\equal{\a}{-}}{}{
            \node[draw, circle, inner sep=0, minimum height=1.5em, state\a] (gqa\i) at (\i \is - 0.22 \is, -17em) {$q_\a$};
            \node[draw, circle, inner sep=0, minimum height=1.5em, state\b] (gqb\i) at (\i \is + 0.22 \is, -17em) {$q_\b$};
      }}
   }

   \foreach \a/\y in {1/0, 2/2,3/4} {
      \foreach \b [count=\i] in {1, 2, 3} {
         \node[draw, circle, inner sep=0, minimum height=1.5em, state\a, opacity=0.5] (gqa\i\a) at (2* \i \is - 0.22 \is, -17em - \y em) {$q_\a$};
         \node[draw, circle, inner sep=0, minimum height=1.5em, state\b, opacity=0.5] at (2* \i \is + 0.22 \is, -17em - \y em) {$q_\b$};
         \node[anchor=east] at (gqa\i\a.west) {$\mathbf{d_{\a\b}}$ \textcolor{black!50}{$\times$}};
   }}

   \node at (1.5em, -19em) {$D = $ \huge $\Bigg($};
   \node at (28em, -19em) {\huge $\Bigg)$};

   \node[draw, minimum width=0.5\is, minimum height=0.5\is, anchor=north] (trans1) at (8\is, -0.5\is - 17em) {$\delta$};
   \path[draw, red,   semithick, arrows={-\arrowHead}] (trans1.east) -| (gqb9);
   \path[draw, green, semithick, arrows={-\arrowHead}] (gqa8) to (trans1);
   \path[draw, green, semithick, arrows={-\arrowHead}] (gqb8) to (trans1);
   \path[draw, red,   semithick, arrows={-\arrowHead}, bend right] (trans1.east) to (gqb8);
   \path[draw, red,   semithick, arrows={-\arrowHead}, bend left]  (trans1.west) to (gqa8);
\end{tikzpicture} %
}
    \caption{%
        Batch processing uses collision-free runs, long sequences of independent interactions, which can be rearranged and grouped together.%
	}
    \label{fig:batched-interactions}
\end{figure}

\ifshort
\begin{algorithm}[H]
\else
\begin{algorithm}[t]
\fi
	\caption{\algSimple: The algorithmic framework for simulation in batches.}
	\label{alg:seq-simple}

	\SetKwComment{Comment}{$\triangleright$\ }{}
	\DontPrintSemicolon
	\SetKwInOut{Input}{input\expandafter\?\?}

	\footnotesize

	\Input{~%
		configuration $C$,
		transition function $\delta$,
		number of steps $N$
	}

	$t \gets 0$\;
	\While{t < N}{

	$\ell \gets$ sample length of a collision-free run\;
	\smallskip
	let $D = (d_{ij})$ be a $\q \times \q$ matrix and sample $d_{ij}$ as \Comment*[r]{batch processing} the number of interactions~$(q_i,q_j)$ among $\ell$ interactions\;
	\smallskip
	let $C'$ be an empty configuration\;
	\ForEach{$(q_i, q_j) \in Q^2$}{
		remove from $C$: $d_{ij}$ agents in states $q_i$, and $d_{ij}$ agents in states $q_j$\;
		$(q'_i, q'_j) \gets \delta(q_i, q_j)$\;
		add to $C'$: $d_{ij}$ agents in states $q'_i$, and  $d_{ij}$ agents in states $q'_j$\;
	}

		\eIf(\Comment*[f]{plant a collision}){$\ell$ is even}{
			sample agent $c_1$ without replacement from $C'$\Comment*[r]{collision at $c_1$}
			merge $C'$ into $C$\;
			sample agent $c_2$ without replacement from $C$\;
		}
		{
			sample agent $c_1$ without replacement from $C$\;
			sample agent $c_2$ without replacement from $C'$\Comment*[r]{collision at $c_2$}
			merge $C'$ into $C$\;
		}
		add agents $\delta(c_1,c_2)$ to $C$\;
	$t \gets t + \ell + 1$\;
	}

	\normalsize
\end{algorithm}
 }

\section{Omitted Proofs}
\ifshort
  \subsection{Correctness of Reordering Sampling}
\else
  \section{Correctness of Reordering Sampling}
\fi
\label{app:reordering-argument}
In this section, we show that the batching steps of \algSimple and \algMulti preserve the population model's interaction probabilities.
The proof relies on the fact that in the underlying sampling process yields all permutations of colors with equal probability.
This follows from the following observation regarding a special case of  P\'{o}lya urn style sampling~\cite{AIHP_1930__1_2_117_0}.
We are given an urn that contains $r$ red marbles and $g$ green marbles.
From this urn, we draw a sequence of marbles independently and uniformly at random without replacement.
A well-known property is that the probability that the $i$-th marble drawn has a fixed color depends only on the initial numbers of marbles in the urn.
For completeness, we include a brief proof of this property for an urn with two colors.
\begin{lemma}\label{lem:polya-urn-reorder}
  Consider an urn with initially $r$ red and $g$ green marbles, and let $R_i$ ($G_i$) be the event that the $i$-th marble drawn without replacement is red (green).
  Then,
  \begin{equation*}
    \prob{R_i} = \frac{r}{r+g} \quad \text{and} \quad \prob{G_i} = \frac{g}{r+g}.
  \end{equation*}
\end{lemma}
\begin{proof}
  This is shown by induction.
  Clearly, $\prob{R_1} = \frac{r}{r+g}$.
  Consider now $\prob{R_{i+1}}$.
  The preceding, $i$-th, draw can have yielded either red or green, so we can consider the two distinct cases conditioned on the preceding draw as:
  \begin{equation*}
    \prob{R_{i+1}} =
    \prob{R_{i+1} | R_{i}}\cdot\prob{R_{i}}\ + \
    \prob{R_{i+1} | G_{i}}\cdot\prob{G_{i}}.
  \end{equation*}
  Independent of the preceding draws' outcomes, there will be $r+g-i$ marbles remaining after $i$ draws.
  In the case of $R_{i}$, there will be one less red marble, and one more green marble compared to $G_{i}$.
  Let $r_{i}$ ($g_{i}$) be the number of remaining red (green) marbles \emph{before} the preceding $i$-th marble was drawn.
  Depending on the color of the $i$-th marble, the number of red or green marbles \emph{after} the $i$-th draw differs by one and therefore,
  \begin{equation*}
    \prob{R_{i+1} | R_{i}} =
    \frac{r_{i}-1}{r_{i}+g_{i}-1}
    \quad \text{and} \quad
    \prob{R_{i+1} | G_{i}} =
    \frac{r_{i}}{r_{i}+g_{i}-1}.
  \end{equation*}
  Combining this, we get:
  \begin{equation*}
    \prob{R_{i+1}} =
    \frac{r_{i}-1}{r_{i}+g_{i}-1}\cdot\frac{r_{i}}{r_{i}+g_{i}} +
    \frac{r_{i}}{r_{i}+g_{i}-1}\cdot\frac{g_{i}}{r_{i}+g_{i}} =
    \frac{r_{i}(r_{i}-1+g_{i})}{(r_{i}+g_{i}-1)(r_{i}+g_{i})} =
    \frac{r_{i}}{r_{i} + g_{i}},
  \end{equation*}
  which is identical to $\prob{R_i}$ and thus by induction, $\prob{R_{i+1}}=\prob{R_{i}}= r / (r+g)$.
\end{proof}

A generalization of \cref{lem:polya-urn-reorder} to more colors is straightforward by fixing a color~$c$ of interest.
Then, color~$c$ takes the role of red, while all other colors are coalesced into green.

\begin{lemma}
  \label{lem:batch-sampling-correctness}
  \algSimple's and \algMulti's sampling algorithm for matrix~$D$ correctly simulates drawing and counting of $\ellHalf$ state pairs without replacement.
\end{lemma}

\begin{proof}
  When sampling from the matrix $D$ in \cref{sec:batch-processing}, we exploit this independence on previous draws when essentially reordering our sampling.
  We do so using two observations:
  \begin{itemize}
  \item Sampling first the initiators and then the responders is just reordering the draws.
  \item When sampling either of the two groups, we simply sample frequencies of the different colors.
    Due to \cref{lem:polya-urn-reorder} all permutations with the same frequency counts are equally likely to occur (\ie the probability of any sequence only depends upon the frequency counts of the colors sampled).
    We hence do not alter the distribution by sampling frequencies one color at a time. \qedhere
  \end{itemize}
\end{proof}
 \ifshort \clearpage \fi
\subsection{Length of a Collision-Free Run}
\label{sec:length-of-a-collision-free-run}
\restateLemCollisionDistance*
\proofCollisionDistance

\restateLemExpectedBatchLength*
\proofExpectedBatchLength

\restateExpectedBatchLengthPrescribed*
\proofExpectedBatchLengthPrescribed

\restateBatchLengthProb*
\proofBatchLengthProb

\restateMultiBatchNumberInteractions*
\proofMultiBatchNumberInteractions

\restateAlgMultiRunningTime*
\proofAlgMultiRunningTime

\section{Implementation Details}
\label{apx:implementation-details}
\subsection{Simulator Interface}
\begingroup
\providecommand\nolinenumbers{}
\nolinenumbers
\begin{lstlisting}[
	caption={Example of a simple \problem{Leader Election} protocol.},
	label=lst:leader-election,
]
struct LeaderElectionProtocol : public
  Protocols::DeterministicProtocol, Protocols::OneWayProtocol
{
  enum Roles : state_t {Follower = 0, Leader = 1};

  state_t operator() (state_t initiator, const state_t responder) const {
    if (initiator == responder) initiator = Follower;
    return initiator; // simplification as one-way protocol
  }

  state_t num_states() const {return 2;}
};
\end{lstlisting}
\begin{lstlisting}[
	caption={%
		A non-deterministic protocol in which either the initiator or the responder circularly increments its state by one. In each interaction the active one is determined by tossing a fair coin.
	},
	label=lst:non-deterministic,
]
state_t operator() (state_t initiator, state_t responder,
                    count_t num, auto& assign) const {
  count_t n = std::binomial_distribution<>{num, 0.5}(gen_);

  //    (state                       , number of agents);
  assign(initiator + 1 %
  assign(responder                   , n);

  assign(initiator                   , num - n); // responder advances
  assign(responder + 1 %
}
\end{lstlisting}
\endgroup
 
\subsection{Sampling the Length of a Collision-Free Run}
\label{sec:sampling-length-collision-free-run}
Recall that \algSimple and \algMulti repeatedly sample the length~$\ell$ of a collision-free run.
In the following we discuss how this sampling can be implemented using the inverse sampling technique (\IST)~\cite{DBLP:books/sp/Devroye86}:
let $\cdf(x)$ be the cumulative density function of a target distribution.
Then, \IST draws a uniform variate~$U$ from $[0; 1]$, solves $U = \cdf(x)$ for $x$, and returns it as the sample.
We denote the CDF of $\colDistr{n}{k}$ as $\colCDF{n}{k}{t}$.
\cref{lem:collision-distance} yields:
\begin{equation*}
	1 - \colCDF{n}{k}{t}
	  = \prob{\ell {>} t}
	  = \prod_i^{t} \frac{n {-} k {-} i}{n}
	  = \frac{1}{n^t} \frac{ (n-k)! }{ (n-k-t-1)! }
	  \stackrel{(x-1)! = \Gamma(x)}{=} n^{-t} \frac{\Gamma(n - k + 1)}{\Gamma(n-k-t)}.
\end{equation*}
Since we are not aware of an inverse that can be evaluated fast, we numerically solve $U = \colCDF{n}{k}{t}$ for $t$.
To avoid numerical instabilities, we rewrite the expression in terms of $\log\Gamma(x)$, which is available as the C standard function \texttt{lgamma}$(x)$:
\begin{equation*}
	U = 1 - n^{-t} \frac{\Gamma(n - k + 1)}{\Gamma(n-k-t)}
	\quad \Leftrightarrow\quad
	\log\left(1 - U\right) = \log\Gamma\left(n - k + 1\right) - \log\Gamma\left(n-k-t\right) - t \log n
\end{equation*}

Lacking a cheap derivative of the RHS, we rely on first-order numerical inversion methods only.
In this context, an ad-hoc combination of binary search and regula-falsi gave most consistent results.
We jump-start the search using a small look-up table containing lower and upper bounds on~$t$ for intervals of $U$ and $k$.

While the method requires $\Oh(\log n)$ evaluations of $\colCDF{n}{k}{\cdot}$, we observe less than ten calls on average for $n = 2^{50}$.
The resulting sampling algorithm has a practical runtime comparable to the sampling of hypergeometric random variates.
Since the latter is sampled much more frequently, further optimizations will yield limited results to the total runtimes of $\algSimple$ and $\algMulti$.
 
\else

\section{\dat{s}}
\label{apx:dynamic-alias-table}
In this section we give the full proof of the correctness of our \dat data structure.
Formally, we show \cref{thm:dynamic-alias-table}, which is restated for convenience as follows.
\restateThmDynamicAliasTable*

\def\walker{DBLP:journals/toms/Walker77}
\def\vose{DBLP:journals/tse/Vose91}

\medskip

\noindent Before we prove the theorem, we briefly recall the main ideas used for our data structure.

\medskip

Our goal is to model an urn which initially contains $n$ marbles, each of which has one of $k$ possible colors.
We assume that the colors are identified by numbers in $\set{1, \dots, k}$.
The urn defines a probability distribution $D$ for the color of a marble drawn uniformly at random: let $p_i$ be the probability that we sample a marble of color $i$.
We can use the alias method \cite{\walker} to sample from $D$ in constant time as follows.

\paragraph{Original Alias Method}
The alias method \cite{\walker} uses a table with two columns and $k$ rows, one row for each element (color) in the distribution $D$.
Each row $i$ has two entries corresponding to two elements.
Each element has a weight in $[0,1]$, and the two weights sum up to $1$ in each row.
The first element of row $i$ is always element $i$.
It has weight $F[i]$.
The second element of row $i$ is stored in $A[i]$.
It has a weight of $1 - F[i]$.
This means that the original alias method uses only the two arrays, $F$ and $A$, to store the distribution $p_1,\dots,p_k$.

To sample from $D$ in the original alias method, we first sample a row $i$ uniformly at random from $\set{1, \dots, k}$.
Then we draw a random real $X \in [0, 1)$.
If $X < F[i]$, we return the left element, $i$.
Otherwise, we return the right element, $A[i]$.
In the following, we modify the alias method and call the resulting data structure \dat.

\paragraph{\dat{s}}
Recall that we assume that our distribution corresponds to an urn storing $n$ marbles of $k$ different colors.
First, we explicitly add a second weight array $S[i]$ which stores the weight of the second column.
Now instead of storing just one real value $F[i]$ for each row $i$, we store the exact numbers of marbles as integers for the first and the second entry of row $i$ in $F[i]$ and $S[i]$, respectively.
As before, the first entry of row $i$ corresponds to color $i$ and the second entry of row $i$ corresponds to color $A[i]$.
The rows are constructed in such a way that the total weight of each row no longer adds up to the real value $1$, but to the integer value $\floor{n/k}$ or $\ceil{n/k}$ such that all rows in total add up to $n$.

\begin{observation} \label{obs:rebuild-dynamic-alias-table}
Let $U$ be a \dat encoding an urn with $n$ marbles and $k$ colors.
The data structure $U$ can be constructed in $\Oh(k)$ time.
\end{observation}

\begin{proof}
The algorithm by \textcite{\vose} can generate a (original) alias table representation of such a discrete probability distribution $D$ in $\Oh(k)$ time.
It is straightforward to define a mapping between the weights of the original alias method as computed in \cite{\vose} and the two integer values used in our \dat.
It follows that the \dat (using integer weights) can be constructed in $\Oh(k)$ time.
\end{proof}

\paragraph{Updating and Sampling from \dat{s}}
As already observed in \cref{sec:sequential-simulation}, our modified data structure now allows us to sample elements with and without replacement.
As before, let $R[i] = F[i] + S[i]$ denote the weight of row~$i$ and define $R_\text{min}$ and $R_\text{max}$ as smallest and largest row weights, respectively.
Observe that in general $R_\text{min} \neq R_\text{max}$.

In order to sample from $U$, we first select row~$i$ uniformly at random from $\set{1,\dots,k}$.
Then we draw a uniform variate~$X$ from $\set{0, \dots, R_\text{max}-1}$.
There are three possible events:
If $X < F[i]$, we emit the first element $i$.
If $F[i] \leq X < R[i]$, we emit the second element $A[i]$.
Otherwise, we reject the trial and restart the sampling process.

If we sample from $U$ without replacement, we decrement the weight of the element we just sampled.
This is always possible, since only elements with strictly positive weights can be sampled in the first place.
If we add a new element with color $i$ to $U$, we increment the weight of the first element of row $i$.

In order to guarantee expected constant sampling time, we ensure that the fraction
between $R_\text{min}$ and $R_\text{max}$ does not exceed a certain value.
Let $0 < \alpha < 1$ and $\beta > 1$ be two parameters chosen such that $\beta / \alpha = \Oh(1)$.
After each update to $U$ we require \begin{equation} \label{eq:dynamic-alias-table-condition} \alpha \floor{n/k} \le R_\text{min} \le R_{max} \le \beta \ceil{n/k} . \end{equation}
Otherwise, we rebuild the data structure in $\Oh(k)$ time.

\medskip

\noindent We are now ready to show \cref{thm:dynamic-alias-table}.

\begin{proof}[Proof of \cref{thm:dynamic-alias-table}]
We start with the memory complexity.
The \dat $U$ stores the values of $k$, $n$, and $R_\text{max}$ as well as three arrays.
Array $F[1\dots k]$ stores the weight of the first column, array $S[1\dots k]$ stores the weight of the second column, and array $A[1\dots k]$ stores the alias, \ie the element of the second column.
All entries are integers from $\set{0, \ldots, n}$ (recall that we assume $k \leq \sqrt{n}$).
Thus, the \dat requires $\Theta(k \log n)$ bits of memory.

\medskip

Let us now consider the sampling procedure.
First, we consider the rejection probability.
Recall that we first sample a row $i$ and then draw a uniform variate $X$ from $\set{0, \dots, R_\text{max}-1}$.
As before, we denote the total weight of row $i$ as $R[i]$ with $R[i] = F[i] + S[i]$.
A sampling trial in row $i$ is rejected if $X \geq R[i]$, \ie with probability $R[i] / R_\text{max}$.
Therefore, the probability to reject a sample from any row is at most $R_\text{min} / R_\text{max}$.
From the conditions in \cref{eq:dynamic-alias-table-condition} we get that the rejection probability is at most $\alpha / \beta = \Oh(1)$ and, conversely, we have at least a constant success probability of $(\beta - \alpha) / \beta$.
The number of trials until we emit an element is therefore geometrically distributed and has an expected value of at most $\beta / (\beta - \alpha) = \Oh(1)$.

\def\R{\mathcal{R}}
\def\S{\mathcal{S}}
\def\OMEGA{\mathbf{\Omega}}
It remains to show that we emit an element of color $i$ with probability $p_i = \hat C_i / n$, where $\hat C_i$ is the number of marbles of color $i$ in the \dat $U$.
We consider a single sampling trial.
Observe that in each trial we are given a uniform probability space
$\OMEGA = \set{(i, x)\ :\ 1 \leq i \leq k \text{ and } 0 \leq x < R_\text{max}}$.
From this probability space we draw the row $i$ and the value $X$ uniformly at random.
Fix a color $c$ and let $\S_c$ be the set of all events $(i,x)$ which lead to emission of an element of color $c$ for this probability space $\OMEGA$.
An event $(i,x)$ is in $\S_c$ if and only if (i)
$i = c$ and $x < F[i]$ or (ii) $A[i] = c$ and $F[i] \leq x < F[i] + S[i]$.

The \dat $U$ is constructed such that the total weight for each color $c$
always equals $\hat C_c$.
Therefore, counting all elementary events gives us $|\S_c| = \hat C_c$.
Observe that $\OMEGA$ is a uniform probability space since the row $i$ and the value $X$ are drawn uniformly. It has size $|\OMEGA| = k R_\text{max}$.
Hence, all events in $\S_c$ have equal probability $1/(k R_\text{max})$, and we get $\Prob{\S_c} = |\S_c| / (k R_\text{max}) = \hat C_c / (k R_\text{max})$.

Let $\mathcal{\R}$ be the event that a trial is rejected.
Analogously to before, we enumerate over all elementary events and obtain $|\R| =
\sum_{i} (R_\text{max} - R[i]) = k R_\text{max} - n$.
For the complementary event $\overline\R$ we get $|\overline\R| = n$, which matches the intuition that the urn contains $n$ marbles.
Hence, we have $\Prob{\overline\R} = n / (k R_\text{max})$.
Observe that $\S_c$ and $\R$ are mutually exclusive and hence $\S_c \cap \overline\R = \S_c \setminus \R = \S_c$.

Rejected trials emit no element, but are repeated.
Hence, we condition on $\overline\R$ and obtain 
\begin{align*}
p_c &= \Prob{\S_c | \overline\R}
= \frac{\Prob{\S_c \cap \overline\R} }{ \Prob{\overline\R} }
= \frac{\Prob{S_c}}{\Prob{\overline\R}}
= \frac{\hat C_c}{k R_\text{max}} \cdot \frac{k R_\text{max}}{n}
= \frac{\hat C_c}{n} .
\end{align*}
This means that a color $c$ is indeed emitted with the correct probability $p_c = \hat C_c / n$.

\medskip

Finally, we consider the amortized costs of rebuilding the \dat $U$ ever so often.
Observe that $U$ has to be rebuilt whenever the condition in \cref{eq:dynamic-alias-table-condition} is violated.
This can happen in two possible ways.
\begin{itemize}
\item Case 1: $R_\text{min} < \alpha \floor{n/k}$.

\noindent
In this case there must exist a row $i$ for which $R[i] < \alpha \floor{n/k}$.
Observe that by assumption of the theorem we have $n \geq k^2$, and after rebuilding $U$ we have $R_\text{min} = \floor{n/k}$.
In order to have $R[i] < \alpha \floor{n/k}$, at least $\floor{n/k}- \alpha \floor{n/k} \geq k(1-\alpha)$ elements must have been deleted from row $i$.
As $0 < \alpha < 1$, this happens only after at least $k(1-\alpha) = \Omega(k)$ sampling operations.
Now according to \cref{obs:rebuild-dynamic-alias-table}, rebuilding takes time $\Oh(k)$.
Together this implies that rebuilding $U$ takes amortized constant time per update of $U$.

\item Case 2: $R_\text{max} > \beta \ceil{n/k}$.

\noindent The second case follows analogously to the first case.
If $R_{\max} > \beta \ceil{n/k}$, at least $\beta \ceil{n/k} - \ceil{n/k} \geq k(\beta-1)$ elements must have been added.
This takes at least $k(\beta - 1) = \Omega(k)$ insertions, and hence rebuilding $U$ takes amortized constant time per insertion.
\end{itemize}

\noindent Removal and insertion operations can be arbitrarily mixed and interact only beneficially towards the amortization arguments.
This concludes the proof of the theorem.
\end{proof}

\ifshort
  \subsection{Correctness of Reordering Sampling}
\else
  \section{Correctness of Reordering Sampling}
\fi
\label{app:reordering-argument}
In this section, we show that the batching steps of \algSimple and \algMulti preserve the population model's interaction probabilities.
The proof relies on the fact that in the underlying sampling process yields all permutations of colors with equal probability.
This follows from the following observation regarding a special case of  P\'{o}lya urn style sampling~\cite{AIHP_1930__1_2_117_0}.
We are given an urn that contains $r$ red marbles and $g$ green marbles.
From this urn, we draw a sequence of marbles independently and uniformly at random without replacement.
A well-known property is that the probability that the $i$-th marble drawn has a fixed color depends only on the initial numbers of marbles in the urn.
For completeness, we include a brief proof of this property for an urn with two colors.
\begin{lemma}\label{lem:polya-urn-reorder}
  Consider an urn with initially $r$ red and $g$ green marbles, and let $R_i$ ($G_i$) be the event that the $i$-th marble drawn without replacement is red (green).
  Then,
  \begin{equation*}
    \prob{R_i} = \frac{r}{r+g} \quad \text{and} \quad \prob{G_i} = \frac{g}{r+g}.
  \end{equation*}
\end{lemma}
\begin{proof}
  This is shown by induction.
  Clearly, $\prob{R_1} = \frac{r}{r+g}$.
  Consider now $\prob{R_{i+1}}$.
  The preceding, $i$-th, draw can have yielded either red or green, so we can consider the two distinct cases conditioned on the preceding draw as:
  \begin{equation*}
    \prob{R_{i+1}} =
    \prob{R_{i+1} | R_{i}}\cdot\prob{R_{i}}\ + \
    \prob{R_{i+1} | G_{i}}\cdot\prob{G_{i}}.
  \end{equation*}
  Independent of the preceding draws' outcomes, there will be $r+g-i$ marbles remaining after $i$ draws.
  In the case of $R_{i}$, there will be one less red marble, and one more green marble compared to $G_{i}$.
  Let $r_{i}$ ($g_{i}$) be the number of remaining red (green) marbles \emph{before} the preceding $i$-th marble was drawn.
  Depending on the color of the $i$-th marble, the number of red or green marbles \emph{after} the $i$-th draw differs by one and therefore,
  \begin{equation*}
    \prob{R_{i+1} | R_{i}} =
    \frac{r_{i}-1}{r_{i}+g_{i}-1}
    \quad \text{and} \quad
    \prob{R_{i+1} | G_{i}} =
    \frac{r_{i}}{r_{i}+g_{i}-1}.
  \end{equation*}
  Combining this, we get:
  \begin{equation*}
    \prob{R_{i+1}} =
    \frac{r_{i}-1}{r_{i}+g_{i}-1}\cdot\frac{r_{i}}{r_{i}+g_{i}} +
    \frac{r_{i}}{r_{i}+g_{i}-1}\cdot\frac{g_{i}}{r_{i}+g_{i}} =
    \frac{r_{i}(r_{i}-1+g_{i})}{(r_{i}+g_{i}-1)(r_{i}+g_{i})} =
    \frac{r_{i}}{r_{i} + g_{i}},
  \end{equation*}
  which is identical to $\prob{R_i}$ and thus by induction, $\prob{R_{i+1}}=\prob{R_{i}}= r / (r+g)$.
\end{proof}

A generalization of \cref{lem:polya-urn-reorder} to more colors is straightforward by fixing a color~$c$ of interest.
Then, color~$c$ takes the role of red, while all other colors are coalesced into green.

\begin{lemma}
  \label{lem:batch-sampling-correctness}
  \algSimple's and \algMulti's sampling algorithm for matrix~$D$ correctly simulates drawing and counting of $\ellHalf$ state pairs without replacement.
\end{lemma}

\begin{proof}
  When sampling from the matrix $D$ in \cref{sec:batch-processing}, we exploit this independence on previous draws when essentially reordering our sampling.
  We do so using two observations:
  \begin{itemize}
  \item Sampling first the initiators and then the responders is just reordering the draws.
  \item When sampling either of the two groups, we simply sample frequencies of the different colors.
    Due to \cref{lem:polya-urn-reorder} all permutations with the same frequency counts are equally likely to occur (\ie the probability of any sequence only depends upon the frequency counts of the colors sampled).
    We hence do not alter the distribution by sampling frequencies one color at a time. \qedhere
  \end{itemize}
\end{proof}
 
\clearpage

\fi

\inshort{\clearpage}
\section{Additional Benchmarks}
\label{apx:additional-benchmarks}

\begin{figure}[H]
    \includegraphics[width=\textwidth]{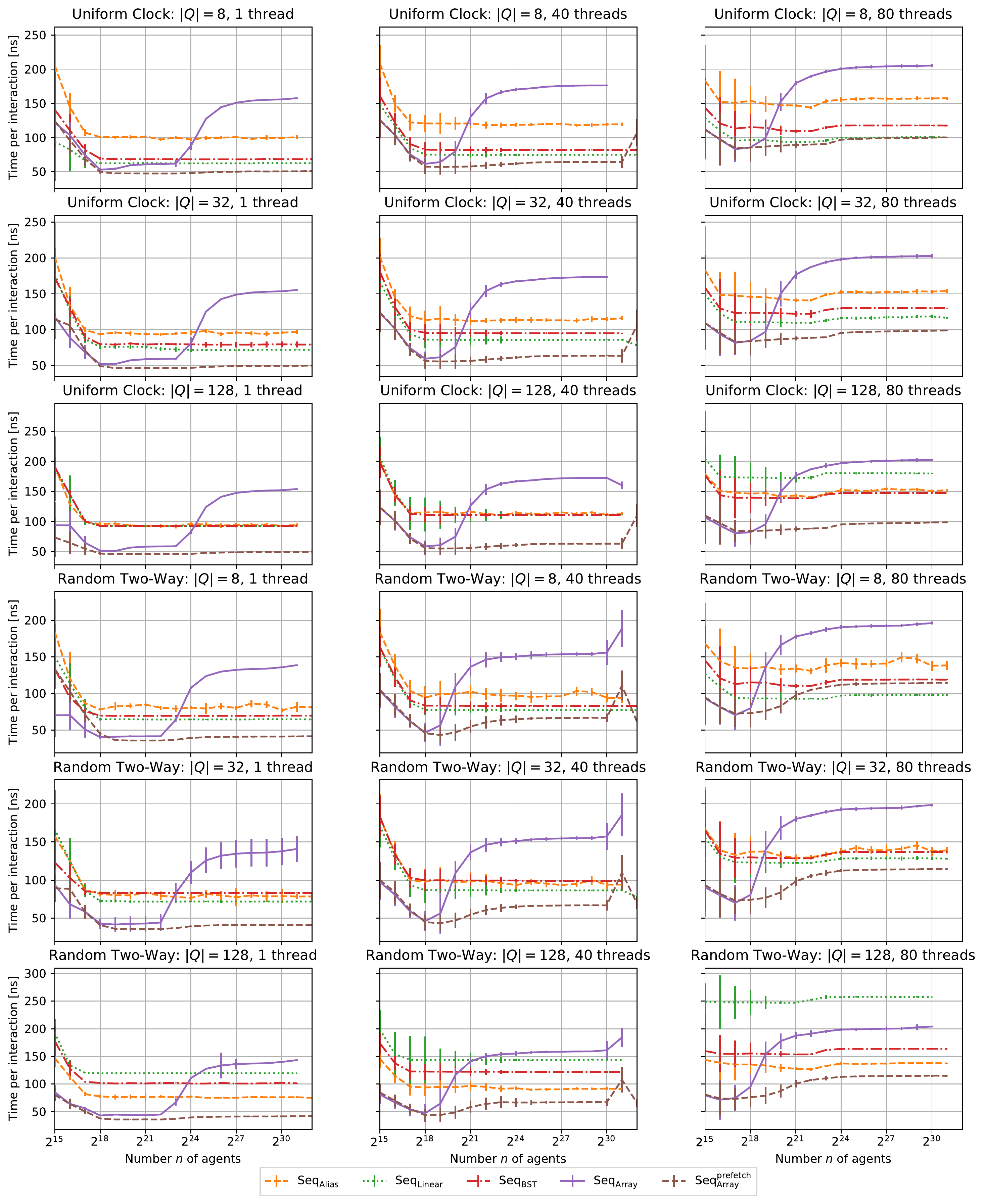}
    \caption{%
        Processing time per interaction as function of the number of agents $n$.
        Each series ends with the largest value of $n$ for which the median of the total processing time is below \SI{400}{\second}.
    }
    \label{fig:benchmark_grid_wo_batch}
\end{figure}

\inshort{\clearpage}
\begin{figure}[H]
    \includegraphics[width=\textwidth]{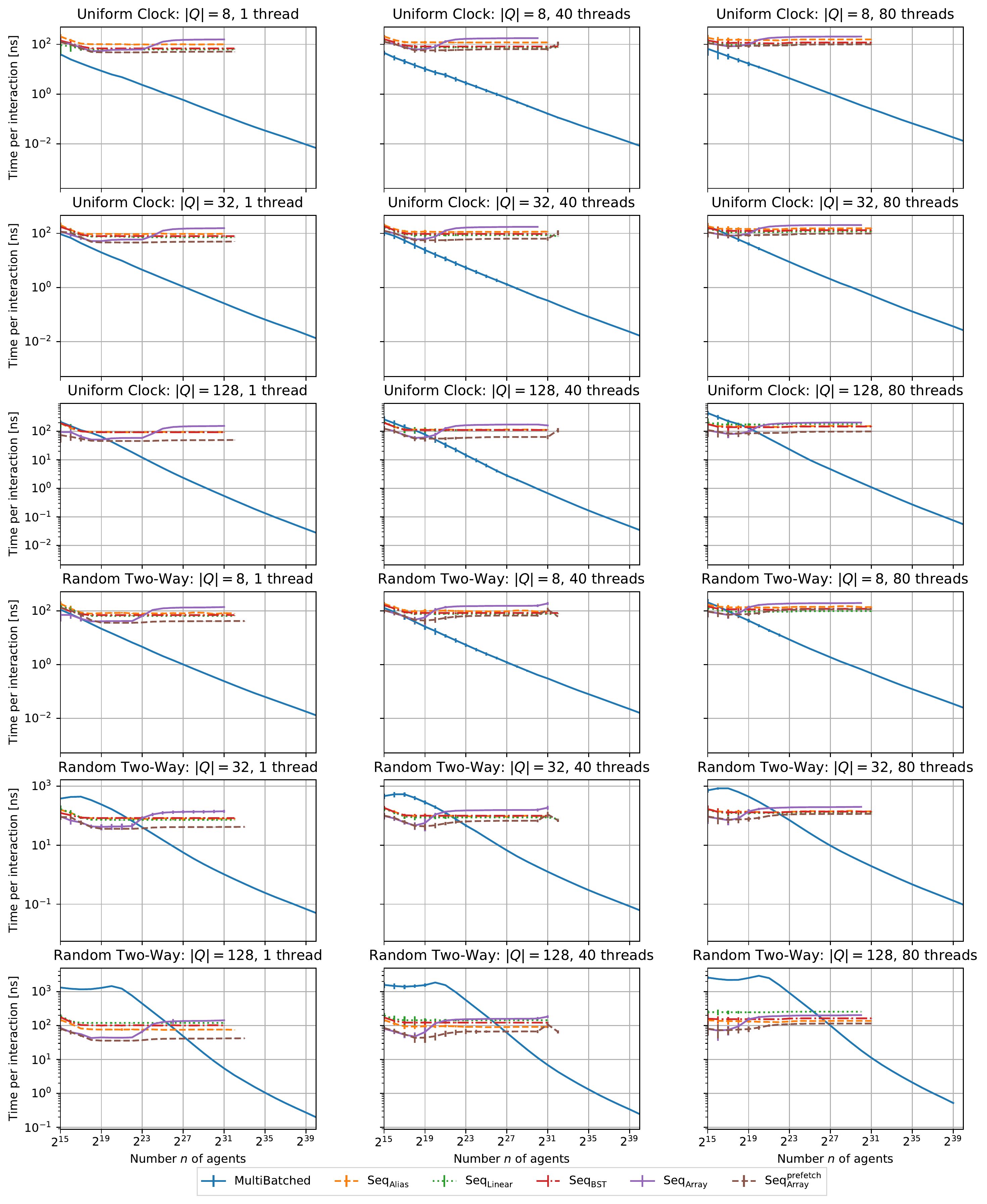}
    \caption{%
        Processing time per interaction as function of the number of agents $n$.
        Each series ends with the largest value of $n$ for which the median of the total processing time is below \SI{400}{\second}.
    }
    \label{fig:benchmark_grid_with_batch}
\end{figure}

\end{document}